\def\dm{d_{\text{min}}}
\newtheorem{theorem}{Theorem}
\newtheorem{lemma}[theorem]{Lemma}
\newtheorem{claim}[theorem]{Claim}
\newtheorem{corollary}[theorem]{Corollary}
\newtheorem{proposition}[theorem]{Proposition}
\newenvironment{reminder}[1]{\smallskip

\noindent {\bf Reminder of #1 }\em}{\smallskip}
\begin{document}

\title{Approximation Algorithms for Min-Distance Problems in DAGs} %TODO Please add

%\titlerunning{Dummy short title} %TODO optional, please use if title is longer than one line

\title{Approximation Algorithms for Min-Distance Problems in DAGs}

\author{Mina Dalirrooyfard\thanks{\url{minad@mit.edu}. Massachusetts Institute of Technology.} \quad \quad \quad \quad Jenny  Kaufmann\thanks{\url{jkaufmann@math.harvard.edu}. Harvard University.}}

\date{\today\footnote{Corrected version.}}

\maketitle

%TODO mandatory: add short abstract of the document
\begin{abstract}
Graph parameters such as the diameter, radius, and vertex eccentricities are not defined in a useful way in Directed Acyclic Graphs (DAGs) using the standard measure of distance, since for any two nodes, there is no path between them in one of the two directions. So it is natural to consider the distance between two nodes as the length of the shortest path in the direction in which this path exists, motivating the definition of the min-distance. The min-distance between two nodes $u$ and $v$ is the minimum of the shortest path distances from $u$ to $v$ and from $v$ to $u$. 

As with the standard distance problems, the Strong Exponential Time Hypothesis [Impagliazzo-Paturi-Zane 2001, Calabro-Impagliazzo-Paturi 2009] leaves little hope for computing min-distance problems faster than computing All Pairs Shortest Paths, which can be solved in $\tilde{O}(mn)$ time. So it is natural to resort to approximation algorithms in $\tilde{O}(mn^{1-\epsilon})$ time for some positive $\epsilon$. 
Abboud, Vassilevska W., and Wang [SODA 2016] first studied min-distance problems achieving constant factor approximation algorithms on DAGs, and 
Dalirrooyfard \textit{et al} [ICALP 2019] gave the first constant factor approximation algorithms on general graphs for min-diameter, min-radius and min-eccentricities. Abboud \textit{et al} obtained a $3$-approximation algorithm for min-radius on DAGs which works in $\tilde{O}(m\sqrt{n})$ time, and showed that any $(2-\delta)$-approximation requires $n^{2-o(1)}$ time for any $\delta>0$, under the Hitting Set Conjecture. 
We close the gap, obtaining a $2$-approximation algorithm which runs in $\tilde{O}(m\sqrt{n})$ time. As the lower bound of Abboud \textit{et al} only works for sparse DAGs, we further show that our algorithm is conditionally tight for dense DAGs using a reduction from Boolean matrix multiplication. 
Moreover, Abboud \textit{et al} obtained a linear time $2$-approximation algorithm for min-diameter along with a lower bound stating that any $(3/2-\delta)$-approximation algorithm for sparse DAGs requires $n^{2-o(1)}$ time under SETH. We close this gap for dense DAGs up to an additive factor, by obtaining an $O(n^{2.350})$-time near-$3/2$-approximation algorithm, i.e. an algorithm which achieves a multiplicative approximation factor of $3/2$ plus an additive error, and showing that the approximation factor is unlikely to be improved within $O(n^{\omega - o(1)})$ time under the high dimensional Orthogonal Vectors Conjecture, where $\omega$ is the matrix multiplication exponent. 
\end{abstract}
\vfill
\pagebreak
\section{Introduction}

Among the most fundamental graph parameters that have been extensively studied are the diameter, radius and eccentricities \cite{chung, Hakimi, ChepoiD94,eppstein-planar-jv, aingworth, Corneil01, Chepoi02, Dvir04, BenMoshe, BeKa07, WN08, Yuster10, Chan12, FHW12, WeYu13, RV13, ChechikLRSTW14, AbboudGW15, BCH+15} (and many others). The eccentricity of a vertex $v$ is the largest distance between $v$ and any other vertex. The diameter is the maximum eccentricity of a vertex in the graph, thus the distance between the two farthest nodes, and the radius is the minimum eccentricity, measuring the maximum distance to the most central node. 

All of these parameters depend on the definition of the distance between two nodes. In undirected graphs, the distance between two vertices is just the shortest path distance $d(\cdot,\cdot)$ between them, which is symmetric. However, in directed graphs, this standard measure of distance $d$ is not necessarily symmetric, since for two nodes, $d(u,v)$ may not equal $d(v,u)$.

Several notions of a ``symmetric'' distance for directed graphs have been studied. Cowen and Wagner \cite{CW99} define the \textit{roundtrip distance}, 
which for two vertices $u$ and $v$ is just $d(u,v)+d(v,u)$. Abboud, Vassilevska W., and Wang \cite{avw} define the \textit{max-distance}, which is $\max\{d(u,v),d(v,u)\}$, and the \textit{min-distance}, which is $\min\{d(u,v),d(v,u)\}$.

%Given a directed graph $G = (V, E)$ and vertices $u, v \in V$, the $\textit{min-distance}$ $\dm(v, w)$ is defined as $\min(d(v, w), d(w, v))$. 
Each of these notions of distance has a particular application \cite{d19}. In this paper, we focus on the min-distance $\dm(\cdot,\cdot)$. 
The min-distance characterizes a quantity of real-world relevance: for instance, a patient may visit a doctor or a doctor may visit a patient, and if they are in a hurry the min-distance between them may matter. %Min-distance is a natural way to define a \textit{symmetric} notion of distance in directed graphs; in contrast the standard definition of distance in directed graphs is not symmetric.
Min-distance is a particularly natural notion of distance in directed acyclic graphs (DAGs), where the standard notion of distance is infinite in at least one direction for any given pair of vertices in a DAG. %\mina{added the following sentense} \jenny{edited a little, seems good}
For example, in a topologically ordered DAG where the edges are directed from left to right, the min-diameter is simply the largest distance $d(u, v)$ where $u$ is to the left of $v$.

More formally, for a vertex $v \in V$, the \textit{min-eccentricity} $\epsilon(v)$ is  $\max_{w\in V} \dm(v, w)$, or in other words, the largest min-distance between $v$ and any other vertex. The \textit{min-diameter} of a graph is $\max_{v \in V} \epsilon(v)$. Note that the min-diameter is the only meaningful notion of diameter for DAGs: all other notions are infinite. The \textit{min-radius} of a graph is $\min_{v \in V} \epsilon(v)$. A \textit{center} is a vertex whose min-eccentricity is equal to the min-radius of the graph.

All-Pairs Shortest Paths (APSP) is the problem of computing the distance between $u$ and $v$ for every pair of vertices $u, v \in V$.
In a graph $G$ with $m$ edges, $n$ vertices, and nonnegative edge weights polynomial in $n$, APSP can easily be computed in  $\tilde{O}(mn)$ time\footnote{The tilde hides polylogarithmic factors.}, by running Dijkstra's algorithm from every vertex\footnote{Faster algorithms are known by Pettie \cite{Pettie02} and Pettie and Ramachandran \cite{pettie2002computing} for sparse graphs.}. Computing eccentricities, diameter, or radius with any of the notions of distance is no harder than computing APSP. 

For the standard notion of distance, under the Strong Exponential Time Hypothesis (SETH) \cite{ipz1,seth2}, there is no \textit{truly subquadratic} time algorithm for diameter (and thus nor for eccentricities) in unweighted graphs: that is, no such algorithm runs in time $O(m^{2-\epsilon})$ for $\epsilon > 0$ \cite{RV13}. This lower bound also holds for the other notions of diameter (and eccentricities)  \cite{d19}.
For radius, the same lower bound holds but under the Hitting Set Conjecture \cite{avw}. 

Since quadratic time is expensive on large graphs, we resort to approximation algorithms. Many constant factor approximation algorithms were known for all notions of diameter, eccentricities and radius, except for the min-distance notion until recently. For example, for the standard diameter and roundtrip diameter there is a folklore linear time $2$-approximation algorithm, and for max-diameter and standard diameter, a conditionally tight $3/2$-approximation algorithm is known in $\tilde{O}(m\sqrt{n})$ time \cite{RV13}. 

Only recently Dalirrooyfard \textit{et al} \cite{d19} showed constant factor approximation algorithms for min-distance problems in general graphs that run in $O(mn^{1-\epsilon})$ time for some fixed $\epsilon>0$. More specifically, they obtained a $3$-approximation algorithm for min-diameter in $\tilde{O}(m\sqrt{n})$ time, a $(3+\delta)$-approximation algorithm for min-radius in $\tilde{O}(m \sqrt n/\delta)$ time, and a $(3+\delta)$-approximation algorithm for min-eccentricities in $\tilde{O}(m \sqrt n/\delta^2)$ time, for any $\delta>0$. 

The reason it is hard to obtain approximation algorithms for min-diameter, min-radius, and min-eccentricities is that min-distance does not obey the triangle inequality. Hence the typical approaches to find algorithms that work for other notions of distance do not work for min-distance, as they crucially rely on the triangle inequality.

On the bright side, since DAGs have more structure, it is easier to find algorithms for them. The best known subquadratic time algorithm for min-diameter in DAGs is a linear time $2$-approximation  algorithm, and the best subquadratic time algorithm for min-radius is a $3$-approximation algorithm in $\tilde{O}(m\sqrt{n})$ time \cite{avw}. However, neither of these algorithms were proven to be conditionally tight. 

%\jenny{I added a sentence here to make the transition smoother. Is it true that these are the only previously known conditional lower bounds?}mina:yup 
Previously, the only known conditional lower bounds for these problems were due to Abboud, Vassilevska W., and Wang \cite{avw}. They showed that under the Orthogonal Vectors Conjecture from fine-grained complexity (and consequently under SETH \cite{w05}), there is no $(3/2-\delta)$-approximation algorithm for any $\delta>0$ for min-diameter which runs in truly subquadratic time on sparse DAGs. Moreover, under the Hitting Set Conjecture, there is no $(2-\delta)$-approximation algorithm for any $\delta > 0$ for min-radius which runs in truly subquadratic time on sparse DAGs.

%There are known barriers to computing, or even closely approximating, min-diameter, min-radius, or min-eccentricities in \textit{subquadratic} time, meaning $O(m^{2-\delta})$ time for constant $\delta > 0$. For min-diameter, the barrier is the Strong Exponential Time Hypothesis, or more specifically the Orthogonal Vectors Conjecture which was shown in \cite{w05} to be implied by SETH. For min-radius, the barrier is the Hitting Set Conjecture. Indeed, in \cite{avw}, Abboud, Vassilevska W., and Wang showed the following conditional lower bounds:

%Proposition \ref{LB-rad} implies a conditional lower bound for approximating min-eccentricities as well, since for $\delta > 0$ a subquadratic $(2-\delta)$-approximation algorithm for min-eccentricities implies a subquadratic $(2-\delta)$-approximation algorithm for min-radius.

%Several subquadratic approximation algorithms for min-distance problems are known. In \cite{avw}, Abboud, Vassilevska W., and Wang found an $O(m\log n)$-time 2-approximation algorithm for min-diameter in DAGs and an $O(m\sqrt n \log n)$-time 3-approximation algorithm for min-radius in DAGs. Subquadratic approximation algorithms for these problems were obtained for general directed graphs in \cite{d19} by Dalirrooyfard \textit{et al}, who found an $\tilde{O}(m \sqrt n)$-time $3$-approximation algorithm for min-diameter, an $\tilde{O}(m \sqrt n/\delta)$-time $(3+\delta)$-approximation algorithm for min-radius, and an $\tilde{O}(m \sqrt n/\delta^2)$-time $(3+\delta)$-approximation algorithm for min-eccentricities.

\subsection{Our results}
We obtain fast algorithms for min-diameter, min-eccentricities and min-radius with improved approximation factors. Our results can be seen in Table \ref{tab:results}.

\begin{table}[]
    \centering
    \begin{tabular}{|c|c|c|c|}
    \hline
    Problem & Upper bound & Lower bound & Reference  \\ \hline
    min-diameter &  $2$ in $O(m)$  & $\mathbf{(\frac{3}{2}-\delta)}$ needs $m^{2-o(1)}$ & \cite{avw} \\ \cline{2-4}
     & $\mathbf{\left(\frac{3}{2}, 1\right)}$ in $O(n^{2.350})$ (dense, unweighted)  & $\mathbf{(\frac{3}{2}-\delta)}$ needs $n^{\omega-o(1)} \ (\ast)$ & this work \\ \hline
      min-radius &  $3$ in $\tilde{O}(m\sqrt{n})$  & $\mathbf{(2-\delta)}$ needs $m^{2-o(1)}$ & \cite{avw}  \\ \cline{2-4}
     & $\mathbf{2}$ in $\tilde{O}(\min(m\sqrt{n},m^{2/3}n))$  & $\mathbf{(2-\delta)}$ needs $n^{\omega-o(1)} \ (\ast) $ & this work  \\ \cline{2-4}
     & $k$ in $\tilde{O}(\min(mn^{1/k}, m^{\frac{2^{k-1}}{2^k -1}}n))$  &  & this work  \\\hline
     min-eccentri. &$3+\delta$ in $\tilde{O}(m\sqrt{n}/\delta^2)$  &  & \cite{d19} \\ \cline{2-4}
     &$k+\delta$ in $\tilde{O}(\min(mn^{1/k}/\delta, m^{\frac{2^{k-1}}{2^k -1}}n/\delta))$  &  & this work \\ \hline
     
    \end{tabular}
    \caption{Results on min-distance problems on DAGs.  The $(\ast)$ marks lower bounds that are for dense DAGs. Our $(2-\delta)$ lower bound for min-radius is based on Triangle Detection and our $(\frac{3}{2}-\delta)$ lower bound for min-diameter is based on high dimensional OV.  Our $k$ and $(k+\delta)$-approximation algorithms are for any integer $k\ge 2$. Conditionally tight bounds are in bold. We note that a  $(c,a)$-approximation for a quantity $D$ is a quantity $D'$ such that $D\leq D'\leq cD+a$. %\jenny{the log Ms should actually be in all of our min-radius and min-ecc runtimes. I think this will take too much space, though, perhaps we should omit it and state in the caption that this is for edge weights polynomial in n or something? also, our 2-approx is actually $\tilde{O}(\min( m\sqrt{n}, m^{2/3}n))$ which is smaller than $n^\omega$, so perhaps we can get rid of the asterisk?}  % \mina{does the min-rad alg of \cite{avw} work for min-ecc?} \jenny{no but \cite{d19} does, (3+delta) with $\delta^2$ in the denom and (5+delta) with $\delta$ in the denom. \\ also, what is the reason why comb. is only going in the LB side? \\ and another thing, both our lower bounds are based on BMM, right? so what is the reason to present one as $n^\omega$ and one as $n^3$ comb?}\mina{\cite{d19} result is not for DAGs though. the reason is that if we don't consider combinatorial, then it is based on triangle detection not BMM. which we can state anyway.} \jenny{\cite{d19} is for general directed graphs of which DAGs are a subset so their algorithms still work in DAGs. }
    %\mina{changed the table.}\jenny{nice!}
    } 
    \label{tab:results}
\end{table}

\subsubsection*{Min-Eccentricities and Min-Radius}

We obtain the first known subquadratic time $(2+\delta)$-approximation algorithm for min-eccentricities in DAGs for any $\delta > 0$, and the first known subquadratic time 2-approximation algorithm for min-radius in DAGs. These algorithms run in time $\tilde{O}(\min(m\sqrt{n}/\delta, m^{2/3}n)/\delta)$ and $\tilde{O}(\min(m\sqrt{n}, m^{2/3}n))$ respectively. %\jenny{not quite, the min-ecc algorithm runtime is divided by delta...also, should we say ``$O(\min(f(m, n), g(m,n)))$'' or is it better to say something like ``simultaneously $O(f(m,n))$ and $O(g(m,n))$''?} 
Note that our algorithms in this section are combinatorial: they do not exploit fast matrix multiplication and are potentially practical. %\jenny{i think this informal definition of combinatorial can replace our footnote} %\footnote{The notion of a “combinatorial” algorithm does not have a formal definition. Intuitively, such algorithms are not only theoretically but also practically efficient. One can think of combinatorial algorithms as having low leading constants in their running times.}. %\jenny{The one in Theorem \ref{unweighted} isn't.}mina: added for this section 
Our results are \textit{conditionally optimal} in both sparse and dense graphs: For sparse graphs,
if the Hitting Set Conjecture is true, then our min-radius result is tight and our min-eccentricity result is essentially tight, in the sense that no approximation factor smaller than 2 can be achieved in subquadratic time for either of these problems \cite{avw}. For dense graphs, our $2$-approximation algorithm works in $\tilde{O}(n^{7/3})$ time, and we show that there is no ($2-\delta$)-approximation algorithm for min-radius (and hence min-eccentricities) in $O(n^{\omega-\epsilon})$ for $\epsilon > 0$, if the best algorithm for Triangle Detection runs in time $\Omega(n^{\omega-o(1)})$. %\jenny{Triangle Detection Conjecture? currently this seems like a type mismatch since TD is a problem not a hypothesis}
Here $\omega<2.37286$ \cite{matrixmult2020} is the exponent of matrix multiplication. %Moreover, we show that there is no ($2-\delta$)-approximation combinatorial algorithm for min-radius (and hence min-eccentricities) in $n^{3-o(1)}$ time if there is no  truly subcubic combinatorial algorithm for Boolean matrix multiplication. \jenny{have not dealt with virginia's comment here}

%we reduce Triangle Detection to ($2-\delta$)-approximation of min-radius, and hence  achieving a lower bound of $n^{3-o(1)}$ lower bound for both min-radius and min-eccentricities.

More generally, we obtain a series of algorithms trading off runtime and accuracy.% hierarchy of approximation algorithms for these problems, parametrized by a positive integer $k$ which governs the tradeoff between runtime and approximation accuracy:

\begin{theorem}\label{main-thm}
For  integer $k \geq 2$ and every $\delta > 0$, there is a $(k+\delta)$-approximation algorithm for min-eccentricities in DAGs which runs in $\tilde{O}(\min(mn^{1/k}/\delta,  m^{2^{k-1}/(2^k -1)}n/\delta))$ time.\\
\indent For every integer $k \geq 2$, there is a $k$-approximation algorithm for min-radius in DAGs which runs in $\tilde{O}(\min(mn^{1/k}, m^{2^{k-1}/(2^k -1)}n))$ time.
\end{theorem}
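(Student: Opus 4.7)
The plan is a sampling-and-BFS algorithm that exploits the structural dichotomy of DAGs with finite min-radius: if $\epsilon(c^*) = r^*$ is finite, then every vertex is comparable to $c^*$, so $V = B^+(c^*, r^*) \sqcup B^-(c^*, r^*)$ and at least one side has more than $n/2$ vertices. I would doubling-search a guess $R$ for $r^*$ and, for each $R$, sample a hitting set $H \subseteq V$ of size $\tilde O(n^{1/k})$ that with high probability intersects every subset of size at least $n^{1-1/k}$, then run $\BFSin$ and $\BFSout$ from each $h \in H$ at total cost $\tilde O(mn^{1/k})$. This tabulates $d(h,v)$ and $d(v,h)$ for all $(h,v) \in H \times V$.

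The core combinatorial claim, proved by induction on $k$, is that among the sampled vertices together with at most $\tilde O(n^{1/k})$ additional candidates extracted from the BFS data, some vertex has min-eccentricity at most $kR$. In the base case $k=2$, the hitting property places some $h \in H$ within distance $r^*$ of $c^*$; say $d(h, c^*) \le r^*$, so the path $h \to c^* \to v$ yields $d(h,v) \le 2r^*$ for every $v \in B^+(c^*, r^*)$, covering the ``opposite'' half of the decomposition. The ``same-side'' half $B^-(c^*, r^*)$ is harder because its vertices are mutually largely incomparable, so I would extract candidate centers from inside $B^+(h, R)$ (which contains $c^*$ itself) that exhibit large bidirectional reach, measured via their BFS distances to $H$, and verify each with one further BFS; at least one such candidate will serve as a $2$-approximate center. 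For $k > 2$, after the sampling step I would recurse with parameter $k-1$ on the smaller problem obtained by restricting to the subgraph of vertices on the same side as $h$, in which $c^*$ remains a center with eccentricity at most $r^*$; composing a recursive $(k-1)$-approximation on that side with the $2r^*$ bound on the opposite side yields a $k$-approximation overall.

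The alternate bound $\tilde O(m^{2^{k-1}/(2^k-1)} n)$, better suited to dense DAGs, uses the same recursive skeleton but truncates each BFS at depth $O(R)$ and tunes the hitting-set size at each of the $k$ recursive levels so that all levels do comparable work; the nested exponent $2^{k-1}/(2^k-1)$ is the fixed point of this $k$-way balancing. For the $(k+\delta)$-approximation of min-eccentricities, I would wrap the min-radius routine in a loop over geometrically spaced guesses $R \in \{(1+\delta/k)^i\}$ and, for each vertex $v$, certify the smallest $R$ for which $\epsilon(v) \le kR$; this quantization explains the additive $\delta$ slack and the $\tilde O(1/\delta)$ overhead in runtime. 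The hardest step I expect is the $k=2$ candidate-enumeration on the same-side half: standard triangle-inequality bounds fail there because the vertices are mutually incomparable, so the selection rule for candidates from $B^+(h, R)$ must be designed carefully---likely via a secondary sampling step inside that ball---to keep the candidate count within the BFS budget while guaranteeing at least one $2$-approximate center. Once this combinatorial step is settled, the remaining analysis and the eccentricity extension reduce to routine accounting.
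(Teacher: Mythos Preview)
Your proposal has a genuine gap at exactly the point you flag: the ``same-side'' half in the $k=2$ base case. You correctly observe that if the hitting set lands an $h$ with $d(h,c^*)\le r^*$, then $h$ already $2r^*$-covers $B^+(c^*,r^*)$, but you offer no concrete mechanism for the remaining half $B^-(c^*,r^*)$. Saying you will ``extract candidate centers from inside $B^+(h,R)$'' is not an algorithm: that ball may contain $\Theta(n)$ vertices, $c^*$ is not distinguished among them by the BFS data you have collected, and there is no triangle inequality to let a nearby sample proxy for $c^*$ on the left side. The ``secondary sampling step'' you allude to would have to guarantee hitting a vertex with small eccentricity to the \emph{left} using only distance information to $H$---but two vertices in $B^+(h,R)$ can have identical distance profiles to $H$ while one is a center and the other has infinite eccentricity on the left. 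Without a selection rule that is actually analyzable, the base case is open, and the recursion for $k>2$ inherits the hole. Your recursion is also underspecified: ``restricting to the subgraph of vertices on the same side as $h$'' need not shrink the instance, and $c^*$'s eccentricity in that subgraph is not obviously $\le r^*$.

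The paper takes a completely different, deterministic route that avoids this difficulty. It partitions the topological order into $p$ consecutive intervals $W_1,\dots,W_p$ and, within each $W_i$ with $\epsilon(W_i)\le r$, runs a \emph{binary search over topologically consecutive subsets} (Claim~\ref{binarysearch}) to locate a subset $S_i\subseteq W_i$ with $\epsilon(S_i)\le r$ such that every vertex of $W_i$ to the left of $S_i$ is certified to have eccentricity $>r$. Then any $w\in W_i$ to the right of $S_i$ that has $(k-1)r$-bounded eccentricity inside $G[W_i]$ (found by recursion) reaches every $v\in L(W_i)$ via $v\to s\to w$ with $s\in S_i$, giving a $kr$ bound; a symmetric argument handles $R(W_i)$. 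The key point is that the binary-search step replaces your missing candidate-enumeration: it deterministically carves $W_i$ into a left part that is provably bad, a middle set $S_i$ serving as a jumping-off point, and a right part handled by recursion. The two runtime bounds $mn^{1/k}$ and $m^{2^{k-1}/(2^k-1)}n$ arise from two choices of $p$ in the same recursion, not from depth-truncated BFS. Your wrapper of geometric guesses for eccentricities is correct and matches what the paper does once the per-$r$ routine exists.
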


As mentioned earlier, the case $k=2$ gives a 2-approximation algorithm for min-radius running in time $\tilde{O}(\min(m\sqrt{n}, m^{2/3}n))$. For $m = \tilde{O}(n^{1.5})$, this matches the runtime and improves the approximation factor of the previous best known algorithm for this problem (from \cite{avw}). %\mina{the have the same bound as us for all m right?} \jenny{when m is small this is $\tilde{O}(m\sqrt{n})$ which is the same as \cite{avw}, when m is larger our runtime is better than \cite{avw}}.
For $m = \omega(n^{1.5 + o(1)})$%\footnote{We say that a function $g(n) $ is $\omega{(f(n))}$ for some function $f$, if for any constant $c>0$, $g(n)>cf(n)$ for all $n>n_0$ for some $n_0>0$.}
, it improves both the approximation factor and the runtime. 

%\jenny{here is a paragraph about techniques}
Our min-eccentricity $(2+\delta)$-approximation algorithm borrows a key idea from the 3-approximation algorithm of \cite{avw} and combines it with a new binary search technique. The idea is to partition the DAG into intervals and do local APSP searches to find local paths, then combine these local paths with ``outer'' paths to guarantee a low enough min-distance to any vertex in the graph. In \cite{avw}, these outer paths were found by using a clever choice of intervals; our algorithm instead applies binary search to find sets which can be used as jumping-off points for the outer paths, allowing us to shorten the lengths of these paths and also allowing us to approximate all min-eccentricities, not only min-radius. Our  $(k+\delta)$-approximation algorithm is achieved by recursively running our approximation algorithm on the intervals instead of running local APSP, which allows us to improve the runtime.
%\mina{a paragraph about techniques? what is new here that was not in \cite{avw}?} \jenny{mostly just a new application of binary search to finding sets that can serve as a halfway point in the 2-approximation paths, or more generally a 1/kth-way point in the k-approximation. the k-approximation is a recursion. I'll write this up tomorrow (Tuesday)}

For sparse graphs, Abboud, Vassilevska W., and Wang \cite{avw} already showed that a $(2-\delta)$-approximation for min-radius needs $\Omega(m^{2-o(1)})$ time under the Hitting Set Conjecture, so our $2$-approximation algorithm is conditionally tight for sparse graphs. We show that the approximation factor of our algorithm is conditionally tight for the dense case as well by reducing Triangle Detection to $(2-\delta)$-approximation of min-radius for any $\delta>0$. The best running time for Triangle Detection in $n$-node graphs is conjectured to be $\Omega(n^{\omega - o(1)})$ by many papers (see for example \cite{abboud2018if,bringmann2018clique}), where $\omega<2.37286$ \cite{matrixmult2020} is the exponent of fast matrix multiplication. Note that, since $m = O(n^2)$, our algorithm runs in $\tilde{O}(n^{7/3})$ time, which is faster than $O(n^\omega)$ for the current best bound on $\omega$. Since the algorithm of Theorem \ref{main-thm} is combinatorial, if we restrict to combinatorial algorithms  then there is no \textit{truly subcubic} (meaning $O(n^{3-\epsilon})$ for $\epsilon > 0$) time $(2-\delta)$-approximation algorithm for min-radius provided that there is no truly subcubic time combinatorial algorithm for Boolean matrix multiplication (BMM). This is because BMM and Triangle Detection are subcubic equivalent \cite{williams2018subcubic}. Note that our reduction graph in Theorem \ref{thm:minradlb} is an unweighted DAG.

%\jenny{should we define subcubic equivalent somewhere?}mina:done

\begin{theorem}
\label{thm:minradlb}
If there is a $T(n,m)$-time algorithm for $(2-\delta)$-approximation of min-radius in $O(n)$-node $\tilde{O}(m)$-edge DAGs for some $\delta>0$, then there is an $\tilde{O}(T(n,m) + m)$-time algorithm for Triangle Detection on graphs with $n$ nodes and $m$ edges.
\end{theorem}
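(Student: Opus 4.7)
The plan is to give a polynomial-time reduction from Triangle Detection on an $n$-node $m$-edge graph $G=(V,E)$ to $(2-\delta)$-approximation of min-radius on an unweighted DAG $H$ with $O(n)$ nodes and $\tilde{O}(m)$ edges, engineering a multiplicative gap of nearly $2$ in the min-radius between the triangle and no-triangle cases. Since a $(2-\delta)$-approximation returns a value in $[\mathrm{OPT},(2-\delta)\mathrm{OPT}]$, a suitable threshold on the returned value will then distinguish the two cases and decide triangle.

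\textbf{Construction.} Given $G$, we build a layered DAG $H$ consisting of a source $s$, a sink $t$, and a constant number of intermediate layers formed from copies of $V$ together with auxiliary vertices. Inter-layer edges encode the edge set $E$ of $G$. We designate a set of ``candidate center'' vertices --- one per $v\in V$ --- whose min-eccentricity is designed to be small if and only if $v$ lies on a triangle in $G$. A key technical issue is that any two vertices sharing a layer of a DAG have infinite min-distance between them, which would trivially make min-radius infinite; we address this by placing the candidate centers along a topologically ordered chain augmented with power-of-two ``shortcut'' edges, so that any two candidates sit at min-distance $O(\log n)$. This preserves the $\tilde{O}(m)$ edge budget.

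\textbf{Correctness.} In the \emph{yes} case (triangle $(u,v,w)$ in $G$), the triangle yields a short ``closing'' path through the inter-layer edges for the candidate center $c_v$; combined with shortcut edges, $c_v$ reaches (or is reached by) every other vertex of $H$ within some distance $r$, so $\mathrm{min\text{-}radius}(H)\le r$. In the \emph{no} case (no triangle), no candidate center admits a short closing path, forcing any witness to take a detour of length at least $2r$ through the full depth of $H$; the only other vertices with finite min-eccentricity (essentially $s$ and $t$) also have min-eccentricity at least $2r$ by construction. Thus $\mathrm{min\text{-}radius}(H)\ge (2-o(1))r$, yielding a gap that exceeds $2-\delta$ for sufficiently large $n$.

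\textbf{Time and main obstacle.} Constructing $H$ takes $O(n+m)$ time; applying the $(2-\delta)$-approximation takes $T(n,m)$; deciding the triangle question requires only a threshold comparison. The total running time is $\tilde{O}(T(n,m)+m)$, as claimed. The primary obstacle is the soundness direction: one must rigorously verify that, without a triangle in $G$, \emph{every} vertex of $H$ --- including auxiliary vertices --- has min-eccentricity at least $(2-o(1))r$. In particular, the chain-plus-shortcut structure among candidate centers must not inadvertently enable a short bypass that covers the whole DAG in distance $<2r$ in the absence of a triangle witness; this requires a careful case analysis on all possible short paths and a quantitative bound showing that any $o(r)$ slack introduced by shortcuts is absorbed by choosing the parameter $r$ large enough relative to $\delta^{-1}$.
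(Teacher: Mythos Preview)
Your submission is a plan, not a proof. You describe the shape of a reduction --- layered DAG, candidate centers on a shortcut chain, a claimed YES/NO gap of $r$ versus $(2-o(1))r$ --- but you never actually specify the construction, and you explicitly flag the soundness direction as an unresolved ``primary obstacle.'' That is the entire content of the theorem: without a concrete graph $H$ and a case analysis showing that \emph{every} vertex of $H$ has min-eccentricity at least $(2-o(1))r$ in the no-triangle case, there is nothing to evaluate. In particular, your source/sink vertices $s,t$ are stated to have min-eccentricity $\ge 2r$ ``by construction,'' but no construction has been given that enforces this while simultaneously keeping $s,t$ within distance $r$ of the triangle witness in the YES case.

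For comparison, the paper's proof is fully explicit and avoids the asymptotic $(2-o(1))$ gap by working with a \emph{constant} integer parameter $t$ chosen so that $\tfrac{2t}{t+1} > 2-\delta$. It builds the graph from two off-the-shelf gadgets: the DAG gadget $DG_t(X)$ of Abboud--Vassilevska~W.--Wang, which places $n$ nodes in a DAG of $O(n)$ vertices and $O(n\log n)$ edges so that every ordered pair is at distance $\le t+1$ (this replaces your power-of-two shortcut chain and gives constant rather than $O(\log n)$ inter-candidate distances), and the connectivity gadget of~\cite{sparsereductions}, which connects $A$ to a copy $A'$ so that $d(a_i,a_j')=2$ for $i\ne j$ but there is no $a_i\to a_i'$ path. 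The construction then stacks $t+1$ copies of $A$ and routes the critical path so that reaching the last copy of $a$ from $a$ either uses a triangle through $A\to B\to C\to A_2'$ (length $3$) or must traverse a padded path of length $\ge 2t$. The NO-case analysis is short and does not rely on any $o(1)$ slack: non-$A$ vertices have infinite min-eccentricity by a symmetry/isolation argument, and each $a\in A$ is shown to be at distance exactly $2t$ from its own copy in $A_{t+1}'$ absent a triangle. If you want to salvage your outline, the two gadgets above are precisely the missing ingredients.
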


\begin{corollary}
Assuming the best algorithm for Triangle Detection runs in time $\Omega(n^{\omega-o(1)})$, there is no algorithm for $(2-\delta)$-approximation of min-radius in $n$-node dense DAGs  that runs in time $O(n^{\omega-\epsilon})$ for any $\delta,\epsilon > 0$. %\mina{$n^{\omega}$ or $n^{\omega-o(1)}$}

Moreover, there is no $O(n^{3-\epsilon})$-time combinatorial algorithm for $(2-\delta)$-approximation of min-radius in $n$-node dense DAGs with $\epsilon, \delta > 0$ if there is no $O(n^{3-\epsilon'})$-time combinatorial algorithm for BMM with $\epsilon' > 0$.

\end{corollary}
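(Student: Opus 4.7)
The plan is to reduce Triangle Detection to $(2-\delta)$-approximate min-radius on a DAG. Given an $n$-vertex, $m$-edge graph $G$ in which we wish to detect a triangle, I would construct a DAG $H$ with $O(n)$ vertices and $\tilde{O}(m)$ edges whose (integer-valued) min-radius equals some fixed integer $r$ if $G$ contains a triangle and is at least $2r$ otherwise. Since $(2-\delta)r<2r$ for any $\delta>0$, a single call to a $(2-\delta)$-approximation algorithm on $H$ then suffices to distinguish the two cases by thresholding.

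For the construction, I would build $H$ as a layered DAG with a constant number of layers, each containing a copy of $V$ plus a few auxiliary hub vertices. The inter-layer edges encode the edges of $G$: for each $(u,v)\in E$ and each consecutive pair of layers $L_i, L_{i+1}$, I would place a directed edge $u_i\to v_{i+1}$, so that a triangle $(a,b,c)$ in $G$ yields a length-$3$ directed path $a_0\to b_1\to c_2\to a_3$ across the layers. The auxiliary hub vertices are included so that a designated family of candidate center nodes are simultaneously ancestors of and/or descendants of every other vertex in the DAG order, which is what keeps their min-eccentricities finite (and thus makes the min-radius of $H$ equal to a meaningful, bounded integer).

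The verification then splits into two parts. In the YES case, I would exhibit a node $c\in V(H)$ whose min-eccentricity is at most $r$ by giving explicit short directed paths to all descendants and from all ancestors, routed through the three triangle edges encoded in the layers. In the NO case, I would argue by a case analysis on the location of an arbitrary candidate center that the absence of any triangle in $G$ forces any would-be short witness path to detour through additional layers or hub vertices, pushing its length to at least $2r$.

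The main obstacle I expect is engineering the factor-$2$ gap cleanly in the NO case: because min-distance in a DAG jumps to infinity whenever two vertices are incomparable, the construction has to balance reachability very carefully, so the candidate center remains comparable to everything while every alternative path of length below $2r$ is simultaneously ruled out. Once the gap is established, the reduction itself runs in $\tilde{O}(m)$ time, the assumed $(2-\delta)$-approximation is then invoked on $H$ in time $T(O(n),\tilde{O}(m))$, and comparing the output to any threshold strictly between $(2-\delta)r$ and $2r$ solves Triangle Detection in total time $\tilde{O}(T(n,m)+m)$, as claimed.
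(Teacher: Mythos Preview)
Your high-level plan is the same as the paper's: the corollary follows immediately from Theorem~\ref{thm:minradlb}, and that theorem is proved by a layered reduction from Triangle Detection in which a triangle yields a short path while its absence forces a long detour. So the overall architecture is right.

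However, the proposal is not yet a proof: you have correctly identified the central difficulty --- ``engineering the factor-$2$ gap cleanly in the NO case'' --- but you have not resolved it, and the na\"ive layered construction you sketch (copies of $V$ in layers $L_0,\dots,L_3$ with $G$-edges between consecutive layers and a few hubs) does not by itself give a clean $r$ versus $2r$ separation while keeping all candidate centers comparable to every vertex. Two concrete pieces are missing. First, you need a way to make every vertex in a designated set $A$ a plausible center, i.e.\ at bounded min-distance to every other vertex in $A$, without creating short paths that would spoil the NO case; the paper imports the DAG gadget $DG_t(A)$ of \cite{avw} for exactly this, and also adds a sink $y$ adjacent only from $A$ to kill every non-$A$ candidate. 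Second, you need to force the only short route from $a$ to the far copy of $a$ to traverse three $G$-edges, while still allowing $a$ to reach the far copies of \emph{other} vertices cheaply; the paper uses the connectivity gadget of \cite{sparsereductions} (size $O(\log n)$, giving $d(v_i,v_j')=2$ for $i\neq j$ but no $v_i\to v_i'$ path) together with $t{-}1$ stacked copies of it and a parallel length-$t$ ``slow lane'' $x_1,\dots,x_t$. The parameter $t$ is then chosen so that $(2t)/(t+1)>2-\delta/2$, which is what converts the construction into a $(2-\delta)$-hardness statement rather than, say, a $3/2$-hardness statement.

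In short: the plan is on the right track and matches the paper's approach, but without specifying the two gadgets above (or equivalents) the NO-case argument does not go through, and the claimed $r$ vs.\ $2r$ gap is unsubstantiated.
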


%\mina{the next result is a bit too complicated for the intro. Maybe just say that we improve the prevoius result (what is the previous result?) and leave it very informal here.} \jenny{here is a less complicated version, do you think this is still too long/complex?}mina: looks good!
\paragraph*{Improving the running time using Fast Matrix Multiplication} In DAGs with small integer edge weights, we further improve the running times for all $k$ in Theorem \ref{main-thm} by applying a result of Zwick in \cite{zwickbridge} on the runtime of APSP in such graphs. We describe our result in more detail in Section 2. %(using fast matrix multiplication). 
In particular, in DAGs with constant integer edge weights, including unweighted DAGs, our result in the case $k = 2$ is as follows:

\begin{theorem}\label{unweighted}
For every $\delta > 0$, there is an $\tilde{O}(\min(m\sqrt{n}/\delta, m^{0.605}n/\delta))$-time $(2+\delta)$-approximation algorithm for min-eccentricities in DAGs with constant integer edge weights. \\
\indent There is an $\tilde{O}(\min(m\sqrt{n}, m^{0.605}n))$-time $2$-approximation algorithm for min-radius in DAGs with constant integer edge weights.
\end{theorem}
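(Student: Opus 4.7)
The plan is to inherit the partition-and-combine architecture of the $k=2$ algorithms from Theorem~\ref{main-thm}, and to replace only their local APSP subroutine with Zwick's APSP algorithm for graphs with small integer edge weights \cite{zwickbridge}. Since Zwick's APSP is exact, neither the approximation factor nor the correctness analysis changes; the improvement comes entirely from a faster black-box and a rebalancing of parameters.

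Concretely, I would recall that in the $k=2$ algorithm the DAG is split by a topological order into $n/s$ consecutive intervals of size $s$, local APSP is run inside each interval, and BFS/Dijkstra from ``jumping-off'' sets chosen by binary search contributes a total outer cost of $\tilde O(nm/s)$. The $\tilde O(m^{2/3}n)$ branch of Theorem~\ref{main-thm} arises by bounding the local APSP cost on each $s$-vertex interval by $O(s^3)$ (since such an interval carries at most $O(s^2)$ edges), for a total of $\tilde O(ns^2)$, and balancing against $\tilde O(nm/s)$ at $s = m^{1/3}$. The $\tilde O(m\sqrt n)$ branch, which uses a different source-sampling argument and never performs a local APSP, is preserved verbatim.

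Next I would invoke Zwick's result: on an $n'$-vertex directed graph with integer edge weights of constant magnitude, APSP is computable in $\tilde O(n'^{\alpha})$ time, where $\alpha = 2+\mu$ and $\mu$ is defined by $\omega(1,\mu,1) = 1+2\mu$; with the current bound on rectangular matrix multiplication, $1 - 1/\alpha < 0.605$. Applied to each of the $n/s$ intervals of size $s$, the total local APSP cost becomes $\tilde O((n/s)\cdot s^{\alpha}) = \tilde O(n s^{\alpha-1})$. Rebalancing against the outer cost $\tilde O(nm/s)$ sets $s^{\alpha}=m$, i.e.\ $s=m^{1/\alpha}$, and yields total runtime $\tilde O(nm^{1-1/\alpha}) = \tilde O(m^{0.605}n)$. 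Taking the minimum with the untouched $\tilde O(m\sqrt n)$ branch produces the bound claimed for min-radius, and an identical substitution in the min-eccentricity algorithm of Theorem~\ref{main-thm} preserves the $1/\delta$ factor, giving $\tilde O(\min(m\sqrt n/\delta,\, m^{0.605}n/\delta))$.

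The main obstacle I expect is verifying that Zwick's algorithm is truly a clean plug-in on the induced subgraphs that arise from the interval decomposition: that the constant edge-weight bound is inherited by induced subintervals (immediate) and that the intra-interval distances consumed by the combination step are exactly what Zwick's APSP returns, rather than a more restricted query that would require revisiting Zwick's internals. Once this is checked, nothing else in Theorem~\ref{main-thm}'s proof needs to change and the two-line parameter rebalancing above delivers Theorem~\ref{unweighted}.
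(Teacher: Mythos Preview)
Your proposal is correct and matches the paper's approach: the paper likewise plugs Zwick's $O(n^{2+\mu(t)})$ APSP into the base case of the Proposition~\ref{k-prop} recursion and rebalances, obtaining the exponent $c_2(\mu(0)) = (1+\mu)/(2+\mu) = 1-1/\alpha < 0.605$ exactly as you compute. One small inaccuracy worth fixing in your write-up: the $\tilde O(m\sqrt n)$ branch is not a separate ``source-sampling argument that never performs a local APSP''---it is the \emph{same} partition-and-APSP algorithm, with the local APSP on each interval bounded instead by $\tilde O(m_i\cdot s)$ (Dijkstra from every vertex) and summed to $\tilde O(m s)$, then balanced against the outer $\tilde O(mn/s)$ cost at $s=\sqrt n$; the two runtime bounds are two analyses of one algorithm, not two algorithms.
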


\subsubsection*{Min-Diameter}

%\mina{to self: min-diam is harder than min-radius (for some reason?), and so to obtain fast algorthms we use sparse fast matrix multiplication. However the running time that we obtain is better than what high dimensional OV says.}
 We obtain a $(3/2, 1/2)$-approximation algorithm for min-diameter in unweighted DAGs, where the multiplicative approximation factor is conditionally optimal in dense graphs. Specifically, our algorithm improves on the standard APSP runtime for any graph with $m = \omega(n^{1+o(1)})$ edges. This is the first known near-$3/2$-approximation algorithm for min-diameter in dense DAGs that runs faster than the best constant factor approximation algorithm for APSP, which runs in $\tilde{O}(n^\omega)$ time in unweighted directed graphs \cite{zwickbridge}. 
%with constant diameter \cite{alon92}, and can be $(1+\epsilon)$-approximated in $\tilde{O}(n^\omega/\epsilon)$ time in unweighted graphs \cite{zwickbridge}. Our algorithms are faster than these running times. \jenny{to fix}

\begin{theorem}\label{mindiam} %\mina{can we write the bound in a more general form, and then discuss more specific running times afterwards?} \jenny{it'll probably be really gross but I can try} mina: that's fine then
There is an $O(m^{0.414}n^{1.522} + n^{2+o(1)})$-time $(3/2, 1/2)$-approximation algorithm for min-diameter in unweighted DAGs.
\end{theorem}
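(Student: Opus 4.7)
The plan is to adapt the Aingworth–Chekuri–Indyk–Motwani / Roditty–Vassilevska W. $3/2$-approximation framework for directed diameter to the DAG min-diameter setting, accelerating with fast matrix multiplication to handle the dense regime. The starting observation is that in a DAG, $\dm(u^*,v^*)=D$ is realized by some pair with $u^*\preceq v^*$ in the topological order, and $\dm(u^*,v^*)=d(u^*,v^*)$; hence along the shortest $u^*$-to-$v^*$ path the ordinary forward triangle inequality holds, even though $\dm$ is globally not a metric.

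With parameters $s$ and $L$ to be balanced, the algorithm has three components. First, a short-range distance oracle that returns $d(u,v)$ exactly whenever $d(u,v)\leq L$, computed by Zwick's rectangular matrix-multiplication distance product in time $\tilde{O}(n^{\omega(1,\mu,1)})$; since $L$ will be kept small, this piece absorbs into the $n^{2+o(1)}$ term of the theorem. Second, for every vertex $v$, a truncated forward- and backward-BFS giving its $s$ closest out- and in-neighbors, in total time $\tilde{O}(m+ns)$. Third, a uniform random sample $R\subseteq V$ of size $\tilde{O}(n/s)$, from which full forward and backward BFS is run at cost $\tilde{O}(mn/s)$. A standard hitting-set argument shows that $R$ intersects every vertex's $s$-neighborhood with high probability.

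The output is the maximum min-distance over all pairs witnessed by the sampled BFSs, the truncated-BFS neighborhoods, and the short-range oracle. Correctness requires showing this maximum is at least $\lceil(2D-1)/3\rceil$, which inverts to the $(3/2,1/2)$ guarantee. The argument is a case split on where a hitting-set vertex lands relative to the diameter pair $(u^*,v^*)$: if some $r\in R$ lies on a shortest $u^*$-to-$v^*$ path, its two BFSs recover $D$ exactly; otherwise, the $s$ closest successors of some vertex on the path either come within $L$ of $v^*$ (invoking the short-range oracle) or supply a truncated-BFS witness covering the $2D/3$ middle segment of the path. Balancing $ns + mn/s + n^{\omega(1,\mu,1)}$ against the minimum $L$ required by the case split, and plugging in the best current bounds on rectangular matrix multiplication, yields the stated $O(m^{0.414}n^{1.522}+n^{2+o(1)})$ runtime.

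The main obstacle is the correctness argument, since min-distance does not satisfy the triangle inequality and the Roditty–Vassilevska W. bound cannot be invoked verbatim. The DAG structure is what rescues the analysis: the shortest $u^*$-to-$v^*$ path is a monotone chain in the topological order, so each sub-case reduces to the ordinary $d$-triangle inequality among vertices of that chain. Threading this invariant carefully — in particular, always ensuring the chosen witness vertex lies topologically between $u^*$ and $v^*$ so that forward-direction distances compose correctly — is the technical heart of the proof; the additive $+1/2$ slack comes from standard integer-rounding in the three-way split of the diameter path.
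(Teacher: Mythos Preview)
Your proposal has a genuine gap: the ``short-range distance oracle'' cannot do what you claim. Zwick's rectangular matrix-multiplication algorithm runs in $\tilde{O}(n^{2+\mu(t)})$ time on graphs with integer \emph{edge weights} bounded by $n^t$; it is not an oracle that computes all distances up to a threshold $L$ in $n^{2+o(1)}$ time. To compute all directed distances up to $L$ in an unweighted graph via distance products costs $\tilde{O}(L\,n^\omega)$ (or, with more care, $n^{2+\mu}$ only when $L=n^{o(1)}$). For your case split you need $L$ comparable to $D$, which can be $\Theta(n)$, so this piece is essentially full APSP and certainly does not absorb into $n^{2+o(1)}$. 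Once the oracle is removed, your remaining ingredients---truncated BFS and a random hitting set---are exactly the Aingworth/Roditty--Vassilevska W. skeleton, which balances to $\tilde{O}(n\sqrt{m})$, not $m^{0.414}n^{1.522}$; your final ``plugging in rectangular matrix multiplication yields the stated runtime'' is unsupported. The correctness sketch is also too thin: the event ``some $r\in R$ lies on a shortest $u^*\to v^*$ path'' is not what a hitting set guarantees, and the alternative case (``the $s$ closest successors come within $L$ of $v^*$ or supply a $2D/3$ witness'') is not an argument but a hope.

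The paper's route is quite different and is worth knowing. It fixes a guess $D'$ and, for each $u$, computes the \emph{leftmost} $n^\epsilon$ vertices $X_u$ reachable from $u$ within distance $D'/2$ (symmetrically, the rightmost $n^\epsilon$ vertices $Y_w$ reaching $w$ within $\lceil D'/2\rceil$). A deterministic hitting set $S$ of size $\tilde{O}(n^{1-\epsilon})$ covers every full-size $X_u$ and $Y_w$; BFS from $S$ handles all pairs $(u,w)$ for which some $s\in S$ lies topologically between them and is close to one endpoint, via the ordinary $d$-triangle inequality along the DAG order. The remaining pairs are exactly those with $u$ to the right of $S\cap Y_w$ and $w$ to the left of $S\cap X_u$; for these one shows that any length-$\le D'$ $u\to w$ path must pass through $X_u\cap Y_w$. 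Detecting whether $X_u\cap Y_w\neq\varnothing$ for all such pairs simultaneously is a Boolean matrix product of two $n\times n$ matrices each with at most $n^{1+\epsilon}$ ones, and this is where Yuster--Zwick \emph{sparse} matrix multiplication enters and produces the $m^{0.414}n^{1.522}$ term after balancing $\epsilon$. The additive $+1$ (hence $+1/2$ after halving) comes from the ceiling in splitting $D'$ into $D'/2$ and $\lceil D'/2\rceil$. None of this set-intersection-via-sparse-product machinery appears in your plan, and it is the heart of the result.
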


This algorithm relies on the sparse matrix multiplication algorithm of Yuster and Zwick \cite{yzmm}. In dense graphs with $m = O(n^2)$, its runtime is $O(n^{2.350})$. In relatively sparse graphs, with $m = O(n^{1.154 + o(1)})$, the second term dominates, so the runtime is $O(n^{2+o(1)})$.

%\mina{added the following paragraph for our techniques.} \jenny{edited a little, I changed dijsktra to BFS since it's unweighted, and also I use L(v) and R(v) to refer to other things later so I renamed those}
Our techniques, which mix known diameter techniques with sparse matrix multiplication, are informally as follows: We first construct a covering set, which will intersect any sufficiently large set. We run BFS from all vertices in the covering set, and check whether any min-distances found were large. If not, then for each vertex $u$, we will define a set of vertices that are relatively ``close'' to $u$ on its right; if this set is large it will intersect the covering set, allowing us to find paths from $u$ to some vertices to its right, using a ``close'' vertex in the covering set as a jumping-off point. The remaining vertices $w$, for which this method did not construct a $u \to w$ path, must have the property that any $u \to w$ path must intersect a relatively small subset of the set of vertices ``close'' to $u$ (note that this set may have been small to begin with, in which case we can skip the previous step). Symmetrically, for each vertex $w$ we can construct the corresponding relatively small subset of vertices ``close'' to $w$ on its left, and then to bound the min-distance between $u$ and $w$ we check whether these two small subsets share a vertex in common. We use sparse matrix multiplication to detect this set intersection.

%The conditional lb says this even though the construction is a sparse instance. there is no algorithm for 3/2 if better than that time, neither for dense nor for sparse graphs. We provide a 3/2 approx alg for dense graphs (can't be applied on the lower bound of [3] to get a faster algortihm for the reduction graph). This algorithm is optimal because ov high dimensional ov.   

%\mina{I added the following paragraph}
The conditional lower bound of \cite{avw} says that if the Orthogonal Vectors Conjecture is true then min-diameter cannot be $(3/2 - \delta)$-approximated in truly subquadratic time in sparse graphs. There is no known $3/2$-approximation algorithm for min-diameter on DAGs that works faster than APSP, neither for dense graphs nor for sparse graphs. %\jenny{hmm I'm not understanding the previous sentence, \cite{avw} doesn't say anything about dense graphs. it just says that 3/2 - delta is impossible in sparse DAGs.}\mina{that sentence doesn't refer to the lower bound. It says that there is no 3/2 algorithm in any case. That's what i meant.} 
%\jenny{as in ``there is no algorithm known'' or as in ``there is no algorithm period''?} \mina{yeah i meant known, fixed}
So the question is: Is $3/2$ the right multiplicative bound for inapproximability of min-diameter in DAGs? We answer this question in the affirmative for dense DAGs. Theorem \ref{mindiam} gives the first near-$3/2$-approximation algorithm that works faster than APSP, and it is optimal (up to an additive factor) conditioned on \textit{high dimensional OV} using the same reduction as \cite{avw}. High dimensional OV can be used for obtaining lower bounds for dense graphs. In high dimensional OV, the dimension of the vectors can be as big as $O(n)$, and using a simple reduction to Boolean matrix multiplication, %\jenny{I follow the intuition here but if this is written out in some paper somewhere I think we should cite it}\mina{I didn't find any references for it, explained it more in prelim}
the best known algorithm for it is in time $O(n^{\omega})$. %\mina{have we mentioned it anywehre else?} \jenny{yes but this is the first time it's mentioned; I'm not sure whether the bound/citation for bound should also go in 1.2 (right now it's there too)}. mina: good i meant first time. jenny: update, this is no longer the first place it's mentioned

High dimensional OV gives a conditional lower bound of $\Omega(n^{\omega-o(1)})$ time for $(3/2-\delta)$-approximation of min-diameter for any $\delta>0$. %\jenny{I think this should be given the same amount of discussion than HS and regular OV are, and added to the preliminaries and such} mina:done
Our algorithm gives an upper bound of $O(n^{2.350})$  for $m=\Theta{(n^2)}$, which is faster than $O(n^\omega)$ for the current best bound on $\omega$. We note while we provide conditionally tight results for the multiplicative approximation factor achievably in dense DAGs, the gap between the lower bound and upper bound for computing min-diameter on sparse DAGs is still open. %\mina{update} \jenny{changed it to the decimal}

\subsection{Preliminaries}
%\mina{define OV conjecture, hitting set conjecture, triangle detection.} \jenny{copy pasting HS below since I already had it stored somewhere, will do the others tomorrow}

All graphs in this paper are  directed graphs. Given a graph $G$, $n$ denotes the number of vertices and $m$ denotes the number of edges. We will assume $m \geq n-1$ since otherwise all min-eccentricities are infinite, a case that is easily checked. %We use $M$ to denote the maximum weight of any edge in $G$. 
All edge weights are assumed to be nonnegative and polynomial in $n$; if $w_{max}$ is the maximum edge weight and $w_{min}$ is the minimum edge weight, we let $M=\max\{w_{max},1/w_{min}\}$. %cut for length, and note that in case of integer weights, $M=W_{max}$. %\jenny{is this a reasonable way to do the min weight assumption?}\mina{maybe let's just say that if $W_{max}$ is the maximum and $W_{min}$ is the minimum, $M=max\{W_{max},1/W_{min}\}$, and note that in case of integers $M=W_{max}$.}
We write $G[S]$ to denote the subgraph of $G$ induced by vertex set $S$. For a vertex $v$, we write $N_D^{\text{in}}(v)$ (respectively, $N_D^{\text{out}}(v)$) to denote the set of vertices $u$ such that $d(u, v) \leq D$ (respectively, $d(v, u) \leq D$).  

For $v \in V$ and $W \subseteq V$, we define $\dm(W, v) = \dm(v, W)$ as $\min_{w \in W} \dm(v, w)$, and we define the min-eccentricity of $W$ as $\epsilon(W) = \max_{v \in V} \dm(W, v)$.

Given two sets $U, W \subseteq V$, if every $u \in U$ appears prior to (respectively, after) every $w \in W$ in a topological ordering of the vertices of $G$, we say that $U$ is the left (respectively, right) of $W$ with respect to the topological ordering. When $U$ or $W$ consists of a single vertex $\{x\}$, we omit the brackets. If $W \subseteq U \subseteq V$, we denote the subset of vertices in $U$ that lie to the left (right) of $W$ by $L_U(W)$ (respectively, $R_U(W)$). If $U = V$, we omit the subscript. A vertex set $W$ is called \textit{topologically consecutive} with respect to a topological ordering if its vertices are consecutive; i.e., if $W = V \setminus (L(W) \cup R(W)$). In general, the relevant topological ordering will be clear, and we will omit reference to it. 
%The \textit{min-diameter} of $G$ is $\max_{v, w \in V} \dm(v,w)$.

Let $\omega(1, r, 1)$ be the exponent of the runtime of multiplying $n \times n^{r}$ by $n^{r} \times n$ matrices. Let $\omega = \omega(1, 1, 1)$ be the square matrix multiplication exponent. \cite{matrixmult2020} showed that $\omega < 2.37286$.

%By a truly subcubic (subquadratic) algorithm for an $n$-node graph we mean an algorithm that runs in time $O(n^{3-\epsilon})$ ($O(n^{2-\epsilon})$) for some constant $\epsilon>0$. jenny: removed this because we use both these terms in the intro and i've edited so that they're defined when used

For specifying lower bounds, we use the following problems with their corresponding running time conjectures. 
\paragraph*{Orthogonal Vectors (OV)} Given two lists $A, B$ of $n$ $d$-dimensional Boolean vectors, determine whether there are vectors $a\in A$ and $b\in B$ such that $a$ and $b$ are \textit{orthogonal}; i.e. there is no $i\in [d]$ such that the $i$th bits of both $a$ and $b$ are $1$. When $d=\Omega(\log{n})$, the \textit{OV Conjecture} \cite{w05} says that there is no algorithm that can solve the OV problem in time $O(n^{2-\epsilon})$ for any fixed $\epsilon>0$. The OV Conjecture is implied by the Strong Exponential Time Hypothesis (SETH) \cite{w05}.

\paragraph*{High Dimensional Orthogonal Vectors}
In high dimensional OV, the dimension $d$ can be as high as $O(n)$. There is a simple reduction from high dimensional OV to matrix multiplication: Given two lists $A=\{a_1,\ldots,a_n\},B=\{b_1,\ldots,b_n\}$ of $d$-dimensional Boolean vectors, let $M$ and $N$ be two $n \times d$ and $d \times n$ Boolean matrices, where $M[i,j]=1$ if $a_i$ is $1$ in bit $j$, and $N[j,k]=1$ if $b_k$ is $1$ in bit $j$, for $j=1,\ldots,d$ and $i,k=1,\ldots,n$. If $MN$ has a zero entry, the vector pair corresponding to that entry are orthogonal. This gives a $O(n^\omega)$ algorithm for high dimensional OV, and there are no faster algorithms known for it up to polylogarithmic factors. Moreover, OV is equivalent to the problem of distinguishing diameter $2$ vs $3$ \cite{RV13}, and so high dimensional OV is equivalent to distinguishing diameter $2$ vs $3$ in dense graphs. A well-known open problem is whether diameter $2$ vs $3$ can be solved faster than matrix multiplication (see for example \cite{aingworth}). %ask whether this problem can be solved faster than matrix multiplication\footnote{More precisely, Aingworth et al \cite{aingworth} ask if there is a combinatorial algorithm running in time $O(n^{3-o(1)})$ for distinguishing between graphs of diameter $2$ and $3$.}. 
Hence, it is conjectured that high dimensional OV cannot be solved in $O(n^{\omega-\epsilon})$ time for any $\epsilon>0$.

\paragraph*{Hitting Set (HS)} Given two lists $A, B\in\{0, 1\}^d$, determine whether there is a vector $a\in A$ that is not orthogonal to any vector $b\in B$. When $d=\Omega(\log{n})$, the \textit{Hitting Set Conjecture} \cite{avw} says that there is no algorithm that can solve the Hitting Set problem in time $O(n^{2-\delta})$ for any fixed $\delta>0$.

\paragraph*{Boolean Matrix Multiplication (BMM)} We abbreviate multiplying two Boolean $n\times n$ matrices over the (AND, OR)-semiring by BMM. It is conjectured that there is no combinatorial algorithm solving BMM in $O(n^{3-\epsilon})$ time for any fixed $\epsilon>0$, and the best algebraic algorithm for it is in $O(n^{\omega + o(1)})$ time for $\omega<2.37286$ \cite{matrixmult2020}. 
\paragraph*{Triangle Detection \cite{williams2018subcubic}} Given a tripartite graph $G(A,B,C,E)$ where $A$, $B$ and $C$ are the three parts of the vertex set and $E$ is the edge set, determine if there are $a\in A$, $b\in B$, and $c\in C$ such that $abc$ is a triangle. Vassilevska W. and Williams \cite{williams2018subcubic} showed that considering only combinatorial algorithms, Triangle Detection and BMM are subcubic equivalent, meaning that a truly subcubic combinatorial algorithm in one results in a truly subcubic combinatorial algorithm in the other. Moreover, the best (algebraic) algorithm for Triangle Detection is through BMM. Thus the best running time for Triangle Detection is $O(n^{\omega})$, and it is conjectured (see for example \cite{abboud2018if,bringmann2018clique}) that there is no algorithm faster than $O(n^{\omega})$ for detecting a triangle.  

 %\jenny{hmm, maybe this should be a footnote that appears the first time we use the word combinatorial?}done

%\begin{conjecture}[\cite{avw}]{\label{HS}}

%There is no $\delta > 0$ such that for all $c \geq 1$, there is an algorithm that given two lists $\mathcal{A}, \mathcal{B}$ of $n$ subsets of a universe $U$ of size $|U| = c \log n$, can decide in time $O(n^{2-\delta})$ whether there exists a set in $\mathcal{A}$ that intersects (or ``hits'') every set in $\mathcal{B}$.
%\end{conjecture}

%Finally, we formally mention the lower bounds obtained by Abboud \textit{et al} obtained for approximating min-diameter and min-radius on DAGs.
%\begin{proposition}[\cite{avw}]{\label{LB-diam}}
%For $\delta > 0$, a $(3/2-\delta)$-approximation algorithm for min-diameter which runs in subquadratic time on sparse DAGs refutes the OV Conjecture and consequently also SETH.
%\end{proposition}

%\begin{proposition}[\cite{avw}]{\label{LB-rad}}
%For $\delta > 0$, a $(2-\delta)$-approximation algorithm for min-radius which runs in subquadratic time on sparse DAGs refutes the Hitting Set Conjecture.
%\end{proposition}

 %(Note that we do not use a subscript to specify min-eccentricity in this paper.)

\section{Min-Eccentricities and Min-Radius}

We present two different versions of our min-eccentricity and min-radius approximation algorithms, one which works in general weighted DAGs and is combinatorial and one with a lower runtime upper bound which only works in DAGs with small integer edge weights. The algorithms are identical except in how they compute APSP; the former computes APSP in the standard combinatorial way, while the latter uses Zwick's fast APSP algorithm for graphs with small integer edge weights. Here, $\mu(t)$ is the value satisfying $\omega(1, \mu(t), 1) = 1 + 2\mu(t) - t$.

\begin{theorem}[\cite{zwickbridge}]\label{zwick}
APSP can be computed in $O(n^{2+\mu(t)})$ time in directed graphs with integer edge weights bounded by $n^t$, where $t < 3 - \omega$. %note: this actually works with negative weights also
\end{theorem}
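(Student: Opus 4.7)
The plan is to prove this by implementing the classical Zwick approach: repeated distance-product squaring, accelerated by random bridging sets, with distance products reduced to rectangular matrix multiplication via the Alon--Galil--Margalit embedding. Let $D_r[i,j]$ denote the length of the shortest $i \to j$ path using at most $r$ edges, so $D_1$ is the weighted adjacency matrix and $D_{2r} = D_r \star D_r$ under the min-plus (distance) product. Iterating this $O(\log n)$ times yields $D_n$, which is the true distance matrix. First, I would describe this scheme, verifying the standard fact that at each scale the output distances are at most $r \cdot n^t$, so entries stay polynomially bounded in $n$.

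Next, I would insert bridging sets. A standard sampling argument shows that a uniformly random set $B_r \subseteq V$ of size $\Theta((n/r)\log n)$ hits every shortest path of length $\geq r$ with high probability. Therefore after computing $D_r$, we can compute a correct $D_{2r}$ (for paths using more than $r$ edges) via the rectangular distance product
\[
D_{2r}[i,j] = \min_{k \in B_r}\bigl( D_r[i,k] + D_r[k,j] \bigr),
\]
combined with the trivial update that keeps the smaller of the old value and the new candidate. For very small $r$ the full $n \times n$ product is affordable; the savings come at larger scales, where $|B_r|$ shrinks.

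The third ingredient is the Alon--Galil--Margalit embedding: to compute a distance product of an $n \times n^\mu$ matrix and an $n^\mu \times n$ matrix whose finite entries are bounded by $W$, replace each entry $a$ by $(n+1)^a$, perform ordinary rectangular matrix multiplication, and read off the minimum exponent present in each output entry. This costs $\tilde O\bigl(W \cdot n^{\omega(1,\mu,1)}\bigr)$. Applying this at the final (most expensive) scale with $|B_r| = n^\mu$ and $r = n^{1-\mu}$, the weight bound is $W = r \cdot n^t = n^{1+t-\mu}$, so the cost is $\tilde O\bigl(n^{1+t-\mu+\omega(1,\mu,1)}\bigr)$. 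Choosing $\mu = \mu(t)$ so that $\omega(1,\mu,1) = 1 + 2\mu - t$ makes the exponent equal $2+\mu(t)$, matching the claim. A short check confirms that earlier scales are no more expensive than this final one (their bridging sets are larger but the weight budget $W$ is proportionally smaller), and that the constraint $t < 3-\omega$ is exactly what is needed for the defining equation for $\mu(t)$ to have a solution with $\mu(t) < 1$.

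The main obstacle in executing this proof is handling the interaction between the weight scale $W$ and the rectangular matrix multiplication exponent $\omega(1,\mu,1)$ cleanly across all $O(\log n)$ scales, rather than only the dominant one; this is essentially a bookkeeping exercise, but it requires verifying that the log factors from the bridging set size and the AGM embedding only contribute to the $\tilde O(\cdot)$ notation and not to the exponent $2+\mu(t)$. Once that is in place, the theorem follows directly from the definition of $\mu(t)$.
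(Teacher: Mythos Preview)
This theorem is not proved in the paper at all: it is stated with a citation to \cite{zwickbridge} and used as a black box. There is therefore no ``paper's own proof'' to compare your proposal against.

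That said, your sketch is a faithful outline of Zwick's argument from \cite{zwickbridge}: iterated min-plus squaring, random bridging sets of size $\tilde O(n/r)$ at scale $r$, the AGM embedding to reduce bounded-weight distance products to rectangular integer matrix multiplication, and the balancing choice of $\mu$ that gives the exponent $2+\mu(t)$. The only caveat is a minor imprecision in your cost analysis across scales: you assert that ``earlier scales are no more expensive'' because larger bridging sets are offset by smaller weight bounds $W$, but in Zwick's actual argument one also needs to split into a regime of small $r$ (where one uses the full $n\times n$ product with small $W$) and large $r$ (where the bridging set shrinks), and to verify that the crossover scale is governed by the same equation $\omega(1,\mu,1)=1+2\mu-t$. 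This is exactly the bookkeeping you flag as the main obstacle, so you are aware of it; filling it in is routine. For the purposes of this paper, however, none of this is needed---simply citing \cite{zwickbridge} suffices.
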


Both versions of our algorithms use a common technique to compute min-distances to and from a vertex set. Given a graph $G$ and a vertex set $W \subseteq V$, we construct a graph $G'$ by adding a vertex $y$ and adding weight-0 edges $(w, y)$ for all $w \in W$. We then run Dijkstra into $y$ in $G'$. We refer to this procedure as \textit{running Dijkstra into $W$}. The symmetric procedure, in which the weight-0 edges point out of an added vertex $y'$ and we run Dijkstra out of $y'$, will be referred to as \textit{running Dijkstra out of $W$}. Then for $x \in V$, $\dm(x, W) = \min(d(x, y), d(y', x))$, a value which we can now compute.  We added $|W|$ edges and ran Dijkstra in $G'$, so in total the procedure takes time $O(|W| + m \log n) = O(m \log n)$.

Our min-eccentricity and min-radius approximation algorithms will be based on the following proposition. Let $c_k(\tau) = \frac{2^{k-2}(1+\tau)}{2^{k-1}(1+\tau) -\tau}$.

\begin{proposition}\label{k-prop} For any $k \geq 2$, there is an $O(\min(mn^{1/k}\log^2 n, m^{2^{k-1}/(2^k -1)}n \log^2 n))$-time algorithm which takes as input a DAG $G$ and a parameter $r$, and certifies for each vertex $v$ that $\epsilon(v) > r$ or that $\epsilon(v) \leq kr$. \\
\indent In DAGs with integer edge weights bounded by $n^t$, where $t < 3 - \omega$, there is a version of this algorithm which runs in $O(\min(mn^{1/k}\log^2 n, m^{c_k(\mu(t))}n \log^2 n))$-time.
\end{proposition}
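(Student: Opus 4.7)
The plan is to exhibit a recursive algorithm that, for the input threshold $r$, certifies each vertex $v$ as either $\epsilon(v)>r$ or $\epsilon(v)\le kr$, by combining local APSP inside topological intervals with global Dijkstra searches out of ``jumping-off'' sets discovered by binary search over the topological order. The base case $k=2$ proceeds as follows: topologically sort $G$, split into $n^{1/2}$ consecutive intervals of size $n^{1/2}$, run APSP inside each induced subgraph (total $\tilde{O}(mn^{1/2})$, since vertex counts sum to $n$ and edges are bounded by $m$), and use $O(\sqrt n)$ Dijkstra runs into and out of boundary sets to certify cross-interval min-distances. For any $v\in I_j$ and any $w\in I_{j'}$ with $j'>j$, I concatenate an intra-interval certificate of length $\le r$ from $v$ to a jumping-off vertex $x$ with a global certificate of length $\le r$ from $x$ to $w$ (read off a precomputed Dijkstra out of the appropriate set), giving $\dm(v,w)\le 2r$. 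If no such $x$ can be found, the failed search exposes a vertex that $v$ cannot reach within distance $r$, certifying $\epsilon(v)>r$; the symmetric argument handles the left direction.

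For general $k\ge 2$, I repeat the partition but recurse on each induced subgraph $G[I_j]$ with parameter $k-1$ and the same threshold $r$, so that intra-interval min-distances are certified up to the factor-$(k-1)$ slack. Cross-interval certificates then combine one outer hop of length $\le r$ with a recursively certified inner piece of length $\le (k-1)r$, yielding $\dm(v,w)\le kr$. The jumping-off vertices are found by binary searching the topological order: at each step, a Dijkstra call out of (or into) a candidate subset $W$, implemented via the added-vertex trick from the beginning of the section, tests whether $W$ reaches a given topological position within distance $r$. Reachability inside each interval is monotone along the topological order in the required sense, so $O(\log n)$ steps per interval suffice; together with the recursion of depth $O(\log n)$ this accounts for the $\log^2 n$ factor in the stated bound.

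For the alternative $m^{2^{k-1}/(2^k-1)}n$ bound, I rebalance the interval size to $n^{\alpha_k}$ with $\alpha_k=2^{k-1}/(2^k-1)$, equating the local APSP cost with the Dijkstra cost at every recursion level; solving the recurrence $T_k(m,n)=n^{1-\alpha_k}\cdot T_{k-1}(m,n^{\alpha_k})+\tilde{O}(mn^{1-\alpha_k})$ with base case $T_2(m,n)=\tilde{O}(m^{2/3}n)$ produces the stated exponent. For the small-weight variant, I substitute Zwick's algorithm from Theorem~\ref{zwick} for each interior APSP call: each such call costs $\tilde{O}(n^{2+\mu(t)})$ on an interval of the relevant size, and the analogous recurrence yields the constant $c_k(\mu(t))=\frac{2^{k-2}(1+\mu(t))}{2^{k-1}(1+\mu(t))-\mu(t)}$.

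The main obstacle I expect is ensuring the approximation factor telescopes cleanly to $kr$ rather than $(k+1)r$ or larger. The outer hop of length $r$ must be paid exactly once at each recursion level, which requires the binary-search jumping-off set $X$ to satisfy $d(v,X)\le r$ for the specific $v$ being certified, uniformly over all $v$ in the current interval. A monotonicity argument along the topological order, together with careful bookkeeping of which vertices can use which jumping-off point, should achieve this; but matching every failed upper-bound certificate with an explicit lower-bound certificate $\epsilon(v)>r$ in every failure branch of the binary search is the delicate piece, and carrying the correct ``budget'' of slack through the recursion without compounding losses is what makes the tight $kr$ bound (rather than an exponential-in-$k$ blowup) achievable.
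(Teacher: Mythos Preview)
Your proposal has the right skeleton---partition into topological intervals, recurse with parameter $k-1$ on each interval, and glue an outer hop of length $\le r$ to an inner recursively-certified piece of length $\le(k-1)r$---and this is indeed the paper's approach. The gap is in the binary-search step that produces the jumping-off set.

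You assert that ``reachability inside each interval is monotone along the topological order in the required sense'' and use this to justify an $O(\log n)$-step search. No such positional monotonicity holds: for a fixed interval $W$, the min-eccentricity $\epsilon(W')$ of a topologically consecutive subset $W'\subseteq W$ is not a monotone function of where $W'$ sits inside $W$. (It is monotone in the \emph{inclusion} order, $W'\subseteq W''\Rightarrow \epsilon(W')\ge\epsilon(W'')$, but binary-searching over, say, suffixes of $W$ only tells you where $\epsilon$ of the suffix crosses $r$; it says nothing about the individual eccentricities of vertices to the left.) The paper's Claim~\ref{binarysearch} does not use positional monotonicity at all. Starting from $S^0=W$ with $\epsilon(W)\le r$, it halves the current set $S^j$ into $S^j_L,S^j_R$ and tests both: if $\epsilon(S^j_L)\le r$ recurse left; else if $\epsilon(S^j_R)\le r$ recurse right, recording that every vertex of $S^j_L$ individually has $\epsilon>r$; else halt, because now both halves---hence every single vertex of $S^j$---have $\epsilon>r$, while $\epsilon(S^j)\le r$ still holds. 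This three-way case split is exactly what converts every failed branch into an explicit lower-bound certificate $\epsilon(w)>r$; your sketch does not supply that mechanism, and the ``delicate piece'' you flag at the end is precisely this claim.

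Two smaller corrections. First, the recursion has depth $k$, a constant; the $\log^2 n$ factor comes from $O(\log n)$ halving rounds in Claim~\ref{binarysearch}, each running one $O(m\log n)$ Dijkstra, not from recursion depth. Second, in the alternative runtime bound your recurrence $T_k(m,n)=n^{1-\alpha_k}T_{k-1}(m,n^{\alpha_k})+\tilde O(mn^{1-\alpha_k})$ passes the full $m$ into every recursive call, which is too pessimistic; the paper's Analysis~2 instead bounds each recursive call purely in terms of the interval's vertex count (using $m_i\le(n/p)^2$), carrying an inductive hypothesis of the form $\tilde O(n^{2c_{k-1}(\tau)+1})$ and only converting back to an $m$-bound at the end.
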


In \cite{lu18}, Le Gall and Urrutia showed that $\mu = \mu(0) < 0.529$.  Thus in DAGs with constant integer edge weights (so that $t = 0$), the runtime of the algorithm of Proposition \ref{k-prop} is $\tilde{O}(\min(mn^{1/k}, m^{c_k(0.529)}n))$ time. When $k=2$, $c_k(0.529) < 0.605$, leading to the special case stated in Theorem \ref{unweighted}.

The algorithms of Proposition \ref{k-prop} will be described and proven correct in subsection 2.1, and their runtimes will be analyzed in Lemma \ref{k-prop-runtime} in subsection \ref{min-ecc-appendix}. Then by binary searching over $r \in [0, Mn]$, these algorithms can be used to obtain the min-eccentricity approximation algorithms of Theorems \ref{k-approx-1} and \ref{k-approx-2} and the min-radius approximation algorithms of Theorems \ref{min-radius-1} and \ref{min-radius-2}.

%\mina{seems like the first part is the same as theorem 1. let's seperate the ''general'' theorem from the ''small integers'' theorem that uses fast mm.} \jenny{is the following adequately separated, or do you think there should be two separate theorems?}\mina{I was thinking more of seperate thms, cause one is combinatorial one uses FMM. The proofs can be the same though.}
\begin{theorem}\label{k-approx-1}
%\mina{remove log n since we have tilde} \jenny{are our edge weights polynomial in $n$ though?} \mina{doesn't matter, i have kept log(M)}
Let $k \geq 2$ be an integer. For any $\delta > 0$, there is an $\tilde{O}(\min(mn^{1/k}/\delta,$ $m^{2^{k-1}/(2^k -1)}n/\delta))$-time algorithm which, given a DAG $G$, outputs for every vertex $v \in V$ an estimate $\epsilon'(v)$ such that $\epsilon(v) \leq \epsilon'(v) < (k+\delta)\epsilon(v)$.
\end{theorem}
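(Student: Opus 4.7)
The plan is to invoke the certifier of Proposition~\ref{k-prop} as a black box at a geometrically spaced grid of thresholds and read off a multiplicative $(k+\delta)$-approximation to each $\epsilon(v)$ from the resulting certifications.

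Concretely, I would set $r_i = w_{\min}(1+\delta/k)^i$ for $i=0,1,\ldots,I$, where $I$ is the smallest integer with $r_I \geq Mn$. Since $M$ is polynomial in $n$ and $\log(1+\delta/k) = \Theta(\delta/k)$, we have $I = O(k\log n/\delta)$. Running the algorithm of Proposition~\ref{k-prop} once for each $r_i$ produces, for every vertex $v$ and every $i$, either a certificate that $\epsilon(v) > r_i$ or a certificate that $\epsilon(v) \leq k r_i$. For each $v$, let $i^*(v)$ be the smallest index for which the certifier returned $\epsilon(v) \leq k r_{i^*(v)}$, and set $\epsilon'(v) = k r_{i^*(v)}$ in that case; if no such index exists, set $\epsilon'(v) = \infty$.

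For correctness, the bound $\epsilon(v) \leq \epsilon'(v)$ is immediate from the certifier's guarantee. For the other direction, if $\epsilon(v) = \infty$ the certifier can never return $\epsilon(v) \leq k r_i$ (that would be false), so we correctly output $\infty$. If $\epsilon(v)$ is finite then $r_I \geq Mn > \epsilon(v)$, so the certifier at $r_I$ is forced to return $\epsilon(v) \leq k r_I$ and $i^*(v)$ is well-defined. When $i^*(v) \geq 1$, minimality of $i^*(v)$ forces the certifier at $r_{i^*(v)-1}$ to have returned $\epsilon(v) > r_{i^*(v)-1}$, and therefore
\[
\epsilon'(v) \;=\; k r_{i^*(v)} \;=\; (k+\delta)\,r_{i^*(v)-1} \;<\; (k+\delta)\,\epsilon(v).
\]
When $i^*(v) = 0$, using that any finite min-eccentricity is at least $w_{\min}$ (since every min-distance between distinct vertices is), we get $\epsilon'(v) = k w_{\min} \leq k\epsilon(v) < (k+\delta)\epsilon(v)$.

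For the runtime, the algorithm makes $O(k\log n/\delta)$ calls to the certifier of Proposition~\ref{k-prop}, each taking $\tilde{O}(\min(mn^{1/k}, m^{2^{k-1}/(2^k-1)}n))$ time; absorbing the $k$ and logarithmic factors into $\tilde{O}$ gives the claimed $\tilde{O}(\min(mn^{1/k}/\delta, m^{2^{k-1}/(2^k-1)}n/\delta))$ bound. The only real design choice is the step ratio $(1+\delta/k)$: calibrating it exactly in this way matches the extra $\delta$ in the approximation factor against the certifier's inherent factor-of-$k$ slack, and neither a coarser grid (which would lose the approximation guarantee) nor a finer grid (which would blow up the runtime) would meet the statement. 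There is no genuine obstacle beyond this bookkeeping, since Proposition~\ref{k-prop} already encapsulates the only non-trivial algorithmic content.
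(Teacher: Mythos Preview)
Your proposal is correct and follows essentially the same approach as the paper: geometrically sweep thresholds with ratio $(1+\delta/k)$ and invoke the Proposition~\ref{k-prop} certifier at each, outputting $kr_{i^*(v)}$ for the first threshold at which $v$ is certified. If anything, you are slightly more careful than the paper about the base case---starting at $w_{\min}$ and explicitly arguing that any finite $\epsilon(v)$ is at least $w_{\min}$---whereas the paper starts at $r=1$ and tacitly assumes the previous-step lower bound holds even at the first step.
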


\begin{theorem}\label{k-approx-2}
 Let $k \geq 2$ be an integer. For any $\delta > 0$, there is an $\tilde{O}(\min(mn^{1/k}/\delta,$ $m^{c_k(\mu(t))}n/\delta))$ time algorithm which, given a DAG $G$ with integer edge weights bounded by $n^t$ for $t < 3 -\omega$, outputs for every vertex $v \in V$ an estimate $\epsilon'(v)$ such that $\epsilon(v) \leq \epsilon'(v) < (k+\delta)\epsilon(v)$.
\end{theorem}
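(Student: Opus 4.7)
The plan is to invoke Proposition \ref{k-prop} as a black box on a geometrically spaced sequence of threshold values and, for each vertex, extract its estimate from the smallest threshold at which the ``$\epsilon(v) \le kr$'' certificate is returned. Assume first that all edge weights are positive integers, so every finite positive min-eccentricity lies in $\{1, 2, \ldots, n^{t+1}\}$; the corner cases $\epsilon(v) \in \{0, \infty\}$ are flagged separately (one auxiliary call to Proposition \ref{k-prop} with $r = 0$ certifies $\epsilon(v) = 0$, and vertices for which no ``$\le$'' certificate ever appears are known to have $\epsilon(v) = \infty$) and handled by setting $\epsilon'(v) = \epsilon(v)$ directly. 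Fix the step factor $\rho = 1 + \delta/k$, choose $r_0 < 1/(k+\delta)$, and define $r_i = \rho^i r_0$ for $i = 0, 1, \ldots, I$, where $I = \lceil \log_\rho(n^{t+1}/r_0) \rceil = O(k\log n / \delta)$. The choice of $r_0$ guarantees $k r_0 < 1 \le \epsilon(v)$ for every finite positive $\epsilon(v)$, which forces the certificate at $i = 0$ to be $\epsilon(v) > r_0$.

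For each $i$, run the algorithm of Proposition \ref{k-prop} with threshold $r_i$; for every $v$ it returns either the certificate $\epsilon(v) > r_i$ or the certificate $\epsilon(v) \le k r_i$. For each $v$ with finite positive $\epsilon(v)$, let $i(v)$ be the smallest index at which the second certificate is returned, and set $\epsilon'(v) = k r_{i(v)}$; existence of $i(v)$ and $i(v) \ge 1$ both follow from the previous paragraph. By minimality, at index $i(v) - 1$ the algorithm returned $\epsilon(v) > r_{i(v)-1}$. The certificate at $i(v)$ gives $\epsilon'(v) \ge \epsilon(v)$ immediately, and combining the two certificates yields
$$\epsilon'(v) = k r_{i(v)} = k\rho \cdot r_{i(v)-1} < k\rho \cdot \epsilon(v) = (k+\delta)\,\epsilon(v),$$
as required.

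The total running time is dominated by the $I$ invocations of the Proposition \ref{k-prop} algorithm, each costing $O(\min(mn^{1/k}\log^2 n,\; m^{c_k(\mu(t))}n\log^2 n))$. Multiplying by $I = O(k\log n/\delta)$ and absorbing polylogarithmic factors and the fixed constant $k$ into $\tilde{O}(\cdot)$ yields the claimed $\tilde{O}(\min(mn^{1/k}/\delta,\; m^{c_k(\mu(t))}n/\delta))$ bound. Because all the substantive algorithmic work has been isolated into Proposition \ref{k-prop}, there is no real obstacle remaining; the only points requiring care are picking $r_0$ small enough that the $i(v) = 0$ boundary case cannot occur (straightforward because integer edge weights give $\epsilon(v) \ge 1$ whenever finite and positive) and checking that $i(v)$ is a well-defined transition index even though Proposition \ref{k-prop} may return either certificate in the overlap range $\epsilon(v)/k \le r < \epsilon(v)$, which is handled by simply accepting the ``$\le$'' certificate wherever it first appears.
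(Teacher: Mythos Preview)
Your proposal is correct and follows essentially the same approach as the paper: geometrically sweep the threshold $r$ with ratio $1+\delta/k$, invoke Proposition~\ref{k-prop} at each step, and for each vertex output $kr$ at the first step where the ``$\epsilon(v)\le kr$'' certificate appears. The paper's version is terser (it starts at $r=1$ rather than introducing $r_0<1/(k+\delta)$, and it does not separately flag the $\epsilon(v)\in\{0,\infty\}$ cases), but the argument and the runtime accounting are the same.
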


%These two theorems can be proven in the same way. A proof is given in Appendix \ref{min-ecc-binary-search}.
\begin{proof}
First we have all the vertices as ``unmarked.'' We do binary search in $[0, Mn]$ by starting with $r=1$ in Proposition \ref{k-prop} and incrementing $r' = (1+\delta/k)r$ at each step. At each step, we run the algorithm given in Proposition \ref{k-prop}, and for each unmarked $v$ that is reported as having $\epsilon(v) \leq kr$, we set $\epsilon'(v) = kr$ and mark $v$. %We increment $r' = r(1+\delta/k)$.
At the end we set $\epsilon'(v) = \infty$ for any remaining unmarked vertices.

Suppose a vertex $v$ was marked at the step corresponding to $r$. Then $r/(1+\delta/k) < \epsilon(v) \leq kr$, so $\epsilon(v) \leq \epsilon'(v) = kr < (k+\delta)\epsilon(v)$. The binary search adds an $O(\log_{1+\delta/k} Mn) = O((\log Mn)/\delta)$ factor to the runtime. Since $\log Mn$ is polylogarithmic in $n$, this gives the time bounds stated.
\end{proof}

\begin{theorem}\label{min-radius-1}
Let $k \geq 2$ be an integer. There is an $\tilde{O}(\min(mn^{1/k},$ $m^{2^{k-1}/(2^k -1)}n))$-time algorithm which, given a DAG $G$, outputs an approximation $R'$ such that if $R$ is the min-radius of $G$, $R \leq R' < kR$. 
\end{theorem}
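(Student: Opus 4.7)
The idea is to invoke the certification algorithm of Proposition \ref{k-prop} as a black box and search over its threshold parameter. Let $f(r) \in \{0,1\}$ indicate whether running Proposition \ref{k-prop} at parameter $r$ certifies at least one vertex $v$ with $\epsilon(v) \le kr$. Two one-sided implications follow immediately from correctness: if $f(r)=1$ then $R \le kr$, and if $f(r)=0$ then every vertex was certified with $\epsilon(v) > r$, so $R > r$. Moreover the center $v^\ast$ has $\epsilon(v^\ast)=R$, which forces $f(r)=1$ whenever $r \ge R$ (Proposition \ref{k-prop} cannot truthfully certify $\epsilon(v^\ast) > r$) and forces $f(r)=0$ whenever $r < R/k$ (no vertex has $\epsilon(v) \le kr$ at all). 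Hence the smallest $r^\ast$ with $f(r^\ast)=1$ must lie in $[R/k,\,R]$.

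I would locate $r^\ast$ by searching over candidate values in $[1, Mn]$, the trivial range for $R$ since edge weights are polynomial in $n$. The certificates produced by Proposition \ref{k-prop} at parameter $r$ remain valid at any $r' \ge r$ (a looser threshold only admits more witnessing paths), so $f$ is monotone non-decreasing, and a binary search pinpoints $r^\ast$ in $O(\log(Mn))=O(\log n)$ calls. Each call costs $\tilde O(\min(mn^{1/k},\, m^{2^{k-1}/(2^k-1)}n))$ by Proposition \ref{k-prop}, so the total runtime matches the claimed bound without any dependence on a slack parameter. Outputting $R' := k r^\ast$ yields $R \le R' \le kR$ directly from $r^\ast \in [R/k, R]$.

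The main obstacle is the ``gray zone'' $[R/k,\,R)$: inside this interval Proposition \ref{k-prop} may certify either way on the same vertex, which could in principle break the monotonicity needed for binary search. I resolve this by arguing that every specific certificate (an explicit witnessing path of length at most $kr$) produced at one threshold automatically carries over to every larger threshold, restoring the needed monotonicity of $f$. To upgrade the output bound from $R' \le kR$ to the strict inequality $R' < kR$ claimed in the statement, I would return the length of the actual witnessing path, rather than the loose upper bound $kr^\ast$: this quantity is a realized min-distance in $G$ from the first certified vertex and is therefore strictly smaller than $kr^\ast$ whenever $r^\ast$ sits in the gray zone, while still certifying a valid upper bound on $R$. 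This refinement leaves the runtime analysis unchanged.
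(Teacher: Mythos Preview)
Your overall plan---binary search over $r$ using the certifier of Proposition \ref{k-prop}---is the same as the paper's, and the two one-sided implications you identify are exactly right. The gap is in the monotonicity step. You argue that ``certificates produced at parameter $r$ remain valid at any $r' \ge r$,'' which is true, but this does not show that the \emph{algorithm} of Proposition \ref{k-prop}, when run afresh at $r'$, will output such a certificate. The algorithm's internal branching depends on $r$: for instance, in Claim \ref{binarysearch} the set $S$ is built by repeatedly testing whether $\epsilon(S^j_L) \le r$ and $\epsilon(S^j_R) \le r$, and these tests can flip as $r$ changes, yielding entirely different sets $S_i$ and $S_i'$. So at $r' > r$ the procedure may certify $\epsilon(v) > r'$ for every vertex it had earlier certified as $\epsilon(v) \le kr$; nothing you have said rules out $f(r) = 1$ and $f(r') = 0$ for $r < r'$ both in the gray zone. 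Without monotonicity, a standard bisection does not locate the smallest $r^\ast$ with $f(r^\ast)=1$.

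The paper sidesteps monotonicity altogether. It maintains the invariants $A_i < R \le B_i$ directly---setting $B_{i+1} = kr$ when some vertex is certified $\le kr$, and $A_{i+1} = r$ when all are certified $> r$---and chooses the next query point $r = A_i + (B_i - kA_i)/(k+1)$ so that the quantity $C_i = B_i - kA_i$ (not $B_i - A_i$) shrinks by a factor $k/(k+1)$ regardless of which branch fires. Once $C_i$ drops below the minimum positive edge weight, discreteness of path lengths forces $R \le kA_i$, and outputting $R' = kA_i$ gives the strict bound $R \le R' < kR$ immediately from $A_i < R$. Your ``return the witnessing path length'' patch does not close this either: Proposition \ref{k-prop} only upper-bounds $\epsilon(v)$ by $kr$ via concatenated paths rather than computing it, and even if you recompute $\epsilon(v)$ exactly with one more Dijkstra, the certified vertex need not be the center, so $\epsilon(v)$ can be as large as $kR$. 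The clean route to strictness is precisely the paper's: track $B_i - kA_i$ rather than $B_i - A_i$.
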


\begin{theorem}\label{min-radius-2}
Let $k \geq 2$ be an integer. There is an $\tilde{O}(\min(mn^{1/k},$ $m^{c_k(\mu(t))}n))$-time algorithm which, given a DAG $G$ with integer edge weights bounded by $n^t$ for $t < 3 -\omega$, outputs an approximation $R'$ such that if $R$ is the min-radius of $G$, $R \leq R' < kR$.
\end{theorem}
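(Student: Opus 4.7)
The plan is to run the faster small-weights version of Proposition \ref{k-prop} inside a binary search, mirroring the proof of Theorem \ref{k-approx-2} but streamlined for the single-number task of approximating the min-radius. Since the edge weights are integers bounded by $n^t$, all min-distances lie in $\{0, 1, \ldots, n^{t+1}\}$, and I would binary search over this discrete range. At each iteration I invoke Proposition \ref{k-prop} with the current threshold $r$: if its output certifies $\epsilon(v) \le kr$ for at least one vertex, I recurse on the lower half (knowing $R \le kr$); otherwise every vertex is certified with $\epsilon(v) > r$, so $R > r$ and I recurse on the upper half. Let $r^*$ denote the smallest integer at which the ``certified'' branch is taken, and return $R' := kr^*$.

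Correctness comes from both sides of the binary search. The successful call at $r^*$ gives a vertex $v$ with $R \le \epsilon(v) \le kr^* = R'$, so $R \le R'$. The unsuccessful call at $r^*-1$ certifies $\epsilon(v) > r^*-1$ for every vertex, which by integrality of the weights forces $R \ge r^*$ and hence $R' = kr^* \le kR$. For the runtime, the binary search adds only an $O(\log n)$ factor, absorbed by $\tilde{O}(\cdot)$, and each of the $O(\log n)$ invocations of Proposition \ref{k-prop} costs $\tilde{O}(\min(mn^{1/k}, m^{c_k(\mu(t))}n))$, matching the claimed bound exactly.

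The main delicate point I anticipate is promoting the natural bound $R' \le kR$ produced by the binary search to the strict $R' < kR$ stated in the theorem. With integer weights, the only way equality $R' = kR$ can arise is when $r^* = R$; in that boundary case I would run one extra Dijkstra-in and one Dijkstra-out of the certified vertex $v$ at the end of the binary search, in $O(m \log n)$ time, to recover $\epsilon(v)$ exactly and return $\min(kr^*, \epsilon(v))$ instead of $kr^*$. This extra step leaves the asymptotic runtime unaffected while closing the remaining gap, since as long as $R > 0$ a more refined accounting of the guarantee of Proposition \ref{k-prop} at the tight threshold rules out $\epsilon(v) = kR$.
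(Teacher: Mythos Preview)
Your binary-search approach is a legitimate and somewhat simpler alternative to the paper's, and it correctly yields $r^* \le R \le kr^*$, hence $R \le R' = kr^* \le kR$. The paper organizes the search differently: it maintains a pair $(A_i, B_i)$ with the strict invariant $A_i < R \le B_i$, chooses $r = A_i + (B_i - kA_i)/(k+1)$ so that the gap $C_i = B_i - kA_i$ shrinks by a factor $k/(k+1)$ each step, and terminates once $C_i$ drops below the minimum positive edge weight, outputting $R' = kA_i$. The strict inequality $R' < kR$ then falls out immediately from $A_i < R$.

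Your proposed upgrade to the strict inequality, however, does not work as written. Proposition~\ref{k-prop} can certify a vertex $v$ with $\epsilon(v) = kr$ exactly: in its proof, distances are bounded via $d(x,w) \le d(x,s) + d(s,w) \le r + (k-1)r = kr$, and both summands can be tight. So at the boundary case $r^* = R$, the single certified vertex you pick may well have $\epsilon(v) = kR$, and returning $\min(kr^*, \epsilon(v)) = kR$ does not beat $kR$ strictly. The ``more refined accounting'' you allude to is not actually present in the proposition's guarantee. (One could instead run Dijkstra from \emph{every} certified vertex---the true center is always among them when $r^* = R$, so this would recover $R$ exactly---but that may cost $\Theta(mn)$.) This is precisely why the paper's strict invariant $A_i < R$ is worth the extra bookkeeping: it delivers $R' < kR$ with no post-processing.
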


%These theorems are likewise proven the same way as each other. Their proof is given in Appendix \ref{min-rad-binary-search}.
\begin{proof}

We do binary search in $[0, Mn]$, running the algorithm given by Proposition \ref{k-prop} at each step as follows: %\mina{added the following} \jenny{looks good}
We keep two numbers $A_i$ and $B_i$ at step $i$ which are the lower bound and upper bound to the min-radius $R$. At step $1$ we have $A_1 = 0$ and $B_1 = Mn$. At step $i$, we have $A_i, B_i$ such that $A_i < R \leq B_i$. %(Here, $A_0 = 0$ and $B_0 = Mn$). 
Let $C_i = B_i - kA_i$. If $C_i$ is smaller than the minimum positive edge weight, then any path of length at most $B_i$ must have length at most $kA_i$, so in this case we terminate the binary search and let $R' = kA_i$. We now have $R \leq R' < kR$ as desired.

If $C_i$ is not smaller than the minimum positive edge weight, let $r = A_i + \frac{C_i}{k+1}$, and run the algorithm given by Proposition \ref{k-prop}. If the algorithm reports that there is a vertex $v$ with $\epsilon(v) < kr$, then let $A_{i+1} = A_i$ and $B_{i+1} = kr = kA_i + \frac{k}{k+1}C_i$, as we have the min-radius is between $A_{i+1}$ and $B_{i+1}$. %the min-radius is at least $A_{i+1} = A_i$ and is less than $B_{i+1} = kr = kA_i + \frac{k}{k+1}C_i$. 
Note that in this case $C_{i+1} = B_{i+1} - kA_{i+1} = \frac{k}{k+1}C_i$. 
Otherwise, if the algorithm reports that every vertex has $\epsilon(v) \geq r$, then the min-radius is at least $A_{i+1} := r = A_i + \frac{C_i}{k+1}$ and is less than $B_{i+1} := B_i$. In this case $C_{i+1} = B_i - k(A_i + \frac{C_i}{k+1}) = \frac{k}{k+1}C_i$. Thus, at each step, the size of $C_i$ shrinks by a factor of $\frac{k}{k+1}$. Hence, for constant $k$, the algorithm will in $O(\log Mn)$ steps find bounds $A_i, B_i$ such that $C_i$ is smaller than the minimum positive edge weight.

\end{proof}

\subsection{Algorithm Description and Correctness}
We now describe and prove the correctness of the algorithm of Proposition \ref{k-prop} by induction on $k$. For convenience, we use $k = 1$ as a base case; in this case we simply run an APSP computation. Our algorithm for $k > 1$ is as follows. %\mina{need to define $c_k$}. \jenny{it's defined earlier}

%\mina{maybe change t to something else cause we need a t up there that might be confused with this. } \jenny{changed to p}
 First, topologically sort the vertices and partition them into $p$ consecutive sets $W_1, \dots W_p$ of size $|W_i| = n/p$. The runtime-minimizing value of $p$ will be chosen later.

For each $i$, run Dijkstra to and from $W_i$. If $\epsilon(W_i) > r$, then we can report $\epsilon(w) > r$ for all $w \in W_i$. Otherwise, $\epsilon(W_i) \leq r$. In this case, we will apply Claim \ref{binarysearch}, below, twice. Recall that for $S \subseteq W \subseteq V$, $L_W(S)$ is the set of vertices in $W$ that are to the left of all vertices in $S$ in the topological ordering.

\begin{claim}\label{binarysearch}
Let $W \subseteq V$ be a topologically consecutive subset of a topologically ordered DAG $G$, and let $r$ be a parameter such that $\epsilon(W) \leq r$. In $O(m\log^2 n)$ time, one can find a nonempty topologically consecutive subset $S \subseteq W$ such that:

\begin{enumerate}[label=(\alph*)]
    \item $\epsilon(S) \leq r$.
    \item If $w \in L_W(S)$, $\epsilon(w) > r$.
    \item If $|S| > 1$, all vertices $s \in S$ satisfy $\epsilon(s) > r$.
\end{enumerate}

\end{claim}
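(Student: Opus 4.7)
The plan is to use binary search to locate the smallest prefix of $W$ (in topological order) whose set-eccentricity is at most $r$, and then to take $S$ to be either this prefix or its rightmost element. Let $w_1, \dots, w_q$ denote the vertices of $W$ listed in topological order. Because $\epsilon(T) = \max_v \min_{t \in T} \dm(v,t)$ is monotone nonincreasing as $T$ grows, the predicate ``$\epsilon(\{w_1,\dots,w_j\}) \le r$'' is monotone in $j$, and $\epsilon(W) \le r$ guarantees that it holds at $j=q$. Binary search then finds the smallest $j^{\ast}$ with $\epsilon(\{w_1,\dots,w_{j^{\ast}}\}) \le r$. Each predicate test is implemented via one call to the Dijkstra-into-a-set and Dijkstra-out-of-a-set procedure from the preliminaries, which computes $\dm(\cdot,T)$ and hence $\epsilon(T)$ in $O(m\log n)$ time; with $O(\log n)$ binary-search steps this stage costs $O(m\log^2 n)$.

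The correctness hinges on a single symmetry observation: for any vertex subset $W' \subseteq V$, if $\epsilon(W') > r$ then \emph{every} $w \in W'$ also satisfies $\epsilon(w) > r$. Indeed, $\epsilon(W') > r$ furnishes a witness vertex $v$ with $\dm(v,w) > r$ for all $w \in W'$, and since $\dm$ is symmetric, $\epsilon(w) \ge \dm(w,v) = \dm(v,w) > r$. Applied to $W' = \{w_1,\dots,w_{j^{\ast}-1}\}$, which by minimality of $j^{\ast}$ has $\epsilon(W') > r$ whenever $j^{\ast} > 1$, this observation yields $\epsilon(w_{\ell}) > r$ for every $\ell < j^{\ast}$. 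Consequently condition (b) is automatic as long as $S$ does not begin strictly to the left of $w_{j^{\ast}}$.

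It then suffices to choose between $S = \{w_{j^{\ast}}\}$ and $S = \{w_1,\dots,w_{j^{\ast}}\}$ based on one additional min-eccentricity computation for the single vertex $w_{j^{\ast}}$, also at cost $O(m\log n)$. If $j^{\ast}=1$ or $\epsilon(w_{j^{\ast}}) \le r$, set $S = \{w_{j^{\ast}}\}$: (a) is immediate, (b) holds by the symmetry observation, and (c) is vacuous because $|S|=1$. Otherwise, set $S = \{w_1,\dots,w_{j^{\ast}}\}$: (a) holds by the definition of $j^{\ast}$, (b) is vacuous because $L_W(S) = \emptyset$, and (c) holds because every $w_{\ell}$ with $\ell < j^{\ast}$ has $\epsilon(w_{\ell}) > r$ by the symmetry observation, while $\epsilon(w_{j^{\ast}}) > r$ by the case assumption. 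The total runtime is $O(m\log^2 n)$ as required.

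The main subtlety is the symmetry observation itself: without it, certifying condition (c) appears to require either a second binary search over the left endpoint of $S$ paired with a per-vertex eccentricity check, or an APSP-scale computation on the prefix, either of which would blow the runtime past $O(m\log^2 n)$. Extracting condition (c) essentially for free from the witness $v$ already produced for condition (b) is where the whole argument lives.
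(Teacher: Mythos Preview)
Your proof is correct, and it rests on the same pivotal fact as the paper's: if $\epsilon(W') > r$ then the single witness $v$ with $\dm(v,w) > r$ for all $w \in W'$ certifies $\epsilon(w) > r$ for every $w \in W'$ individually. The paper uses this implicitly; you isolate it as the ``symmetry observation'' and build the whole argument around it.

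Where you differ is in the search structure. The paper runs a halving search on a moving window $S^j \subseteq W$: at each step it splits $S^j$ into left and right halves, recurses into a half with set-eccentricity $\le r$, and halts (returning $S = S^j$) when both halves exceed $r$. The resulting $S$ can be an arbitrary consecutive segment of $W$. You instead exploit the monotonicity of $j \mapsto \epsilon(\{w_1,\dots,w_j\})$ to do a one-dimensional binary search on prefix length, locate the minimal $j^\ast$, and then make a single final decision between $S = \{w_{j^\ast}\}$ and $S = \{w_1,\dots,w_{j^\ast}\}$. Your $S$ is therefore always a singleton or a full prefix. This is arguably cleaner: the monotone predicate is explicit, the search is textbook, and the case split at the end is transparent. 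The paper's version is slightly more flexible in that it can output a smaller (middle) segment, but the downstream algorithm only cares about properties (a)--(c) and whether $|S|=1$, so nothing is lost.

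One small remark: your sentence ``condition (b) is automatic as long as $S$ does not begin strictly to the left of $w_{j^{\ast}}$'' is only setting up Case~1; in Case~2 your $S$ does begin at $w_1$, and you correctly handle (b) there by noting $L_W(S) = \emptyset$. The logic is fine, but the transitional sentence could be read as a general claim, so you may want to tighten the phrasing.
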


\begin{proof}

We will use a binary search argument to find $S$. We will induct on an index $j$. Let $S^0 = W$. Assume that $S^j \subseteq W$ is topologically consecutive, that $\epsilon(S^j) \leq r$, and that for every $w \in L_W(S^j)$, $\epsilon(w) > r$. These all hold for $j = 0$. If $S^j = \{s\}$ consists of a single vertex, let $S = S^j$; then we are done.

Otherwise, let $S^j_L$ be the subset of $S^j$ containing its first $|S^j|/2$ vertices in the topological ordering and let $S^j_R = S^j \setminus S^j_L$. So $S^j_L$ and $S^j_R$ are the left and right halves of $S^j$, respectively; hence both $S^j_L$ and $S^j_R$ are topologically consecutive. See Figure \ref{fig:min-ecc-1}.

\begin{figure}[h]
    \centering
    \includegraphics[scale=0.38]{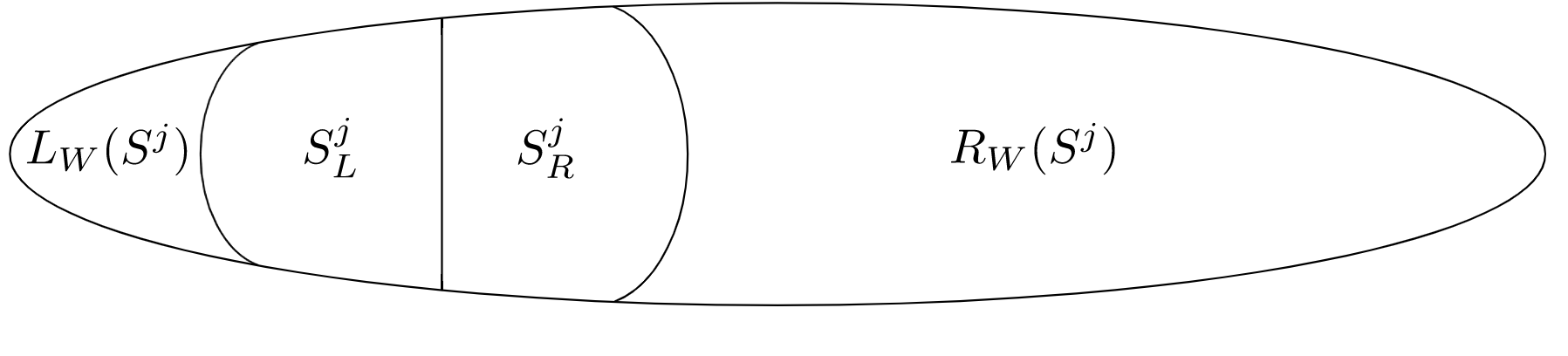}
    \caption{$S^j$ is partitioned into two halves, $S^j_L$ and $S^j_R$.}
    \label{fig:min-ecc-1}
\end{figure}

Run Dijkstra from $S^j_L$ and from $S^j_R$. If either of these sets has min-eccentricity at most $r$, we will continue the induction: If $\epsilon(S^j_L) \leq r$, we let $S^{j+1} = S^j_L$. Then $L_W(S^{j+1}) = L_W(S^j)$, so for every $w \in L_W(S^{j+1})$, $\epsilon(w) > r$. Alternatively, if $\epsilon(S^j_L) > r$ but $\epsilon(S^j_R) \leq r$, we let $S^{j+1} = S^j_R$. Then $L_W(S^{j+1}) = L_W(S^j) \cup S^j_L$, so for every $w \in L_W(S^{j+1})$, $\epsilon(w) > r$.

Otherwise, $\epsilon(S^j_L) > r$ and $\epsilon(S^j_R) > r$. In this case we halt the induction and let $S = S^j$. Every $w \in L_W(S^{j}) \cup S^j$ satisfies $\epsilon(w) > r$, so $S$ has the properties desired.

At each step, the size of the set $S^j$ halves, so there are at most $\log |W| \leq \log n$ iterations. In each iteration, we perform a constant number of Dijkstras, so the runtime is $O(m \log^2 n)$. %\mina{why $log^2$?}. \jenny{log n iteractions and each dijkstra is $m\log n$ time}
\end{proof}

For each $i$ such that $\epsilon(W_i) \leq r$, let $S_i$ be the subset constructed by applying Claim \ref{binarysearch} to the set $W=W_i$. For each $w \in L_{W_i}(S_i)$, we report that $\epsilon(w) > r$; this holds by Claim \ref{binarysearch}b. If $S_i$ consists of a single vertex $\{s\}$, we can determine that for any $v \in L(W_i)$, $\dm(v, s) \leq \epsilon(s) \leq r \leq kr$, by Claim \ref{binarysearch}a. Otherwise, $|S_i| > 1$, so we report that $\epsilon(s) > r$ for all $s \in S_i$; this holds by Claim \ref{binarysearch}c.

Using a recursive application of our algorithm to the graph $G_i = G[W_i]$, we can certify, for every vertex $w \in W_i$, that  $\epsilon_{G_i}(w) > r$ or that $\epsilon_{G_i}(w) \leq (k-1)r$. Consider any $w \in R_{W_i}(S_i)$. If we determined that $\epsilon_{G_i}(w) > r$, we report that $\epsilon(w) > r$; this holds since $\epsilon(w) \geq \epsilon_{G_i}(w)$. Otherwise, consider any $v \in L(W_i)$. Since $\epsilon(S_i) \leq r$, there is some $s \in S_i$ such that $\dm(v, s) = d(v, s) \leq r$. Then since $\epsilon_{G_i}(w) \leq (k-1)r$ and since $w$ is to the right of $s$ in the topological ordering, we have $\dm(v, w) \leq d(v, s) + d(s, w) \leq r + (k-1)r = kr$. See Figure \ref{fig:min-ecc-2}.

Thus, our algorithm has certified for each $w \in W_i$ that $\epsilon(w) > r$ or that $\dm(v, w) \leq kr$ for all $v \in L(W_i)$. By a symmetric argument, we can construct the set $S_i'$ obtained by applying Claim \ref{binarysearch} to the graph $G$ with the edges reversed; see Figure \ref{fig:min-ecc-2}. Then as above we can determine for each $w \in W_i$  that $\epsilon(w) > r$ or that $\dm(w, v') \leq kr$ for all $v' \in R(W_i)$. Since $W_i$ is a topologically consecutive set, $V \setminus W_i = L(W_i) \cup R(W_i)$. So for any $w \in W_i$, if we determine that $\dm(w, v) \leq kr$ for all $v \in L(W_i)$ and for all $v \in R(W_i)$ we report that $\epsilon(w) \leq kr$; otherwise we report $\epsilon(w) > r$.

\begin{figure}
    \centering
    \includegraphics[width=\textwidth]{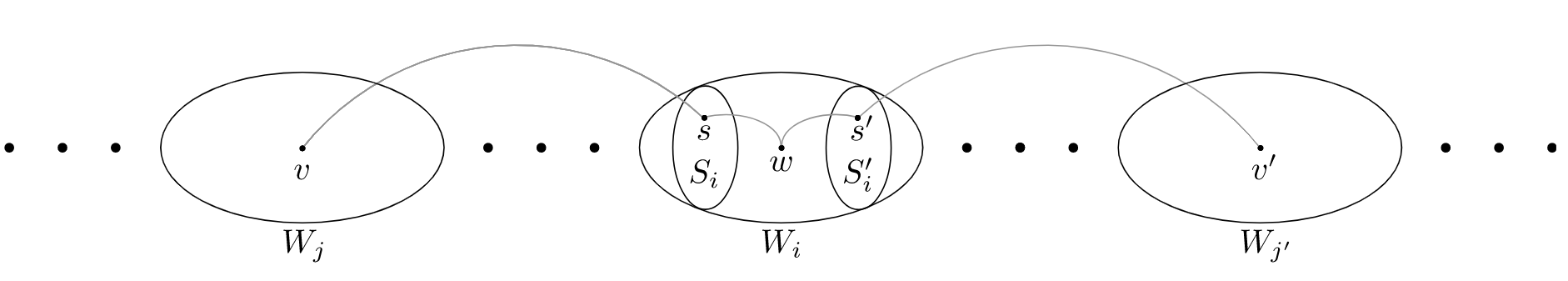}
    \caption{A representation of the $v \to w$ and $w \to v'$ paths, via the sets $S_i$ and $S_i'$ constructed with Claim \ref{binarysearch}. The outer subpaths are of length $ \leq r$, and the inner subpaths are of length $\leq (k-1)r$.}
    \label{fig:min-ecc-2}
\end{figure}

\subsection{Runtime Analysis}\label{min-ecc-appendix}

In this section we analyze the runtime of the algorithm of Proposition \ref{k-prop}, and we give full descriptions of how to prove Theorems \ref{k-approx-1}-\ref{min-radius-2} from Proposition \ref{k-prop} using binary search. 

Recall that $c_k(\tau) = \frac{2^{k-2}(1+\tau)}{2^{k-1}(1+\tau) -\tau}$.

\begin{lemma}\label{k-prop-runtime}
The algorithm of Proposition \ref{k-prop} runs in time $O(\min(mn^{1/k}\log^2 n, m^{2^{k-1}/(2^k -1)}n \log^2 n))$ assuming APSP computations are done in $\tilde{O}(mn)$ time. \\ %\mina{like combinatorial? can you say what is standard way? like $O(mn)$?}
\indent On graphs with integer edge weights bounded by $n^t$ for $t < 3 - \omega$, the algorithm runs in time \\ $O(\min(mn^{1/k}\log^2 n, m^{m^{c_k(\mu(t))}}n \log^2 n))$, assuming APSP computations are done in $O(n^{2+\mu(t)})$ time using Zwick's fast APSP algorithm \cite{zwickbridge}.
\end{lemma}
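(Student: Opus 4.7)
The plan is to prove both bounds in the $\min$ by induction on $k$, using the recurrence that falls out of the algorithm of Proposition~\ref{k-prop}: for $k \ge 2$ the algorithm partitions $V$ into $p$ consecutive blocks of size $n/p$, pays $O(pm \log^2 n)$ for the $p$ Dijkstras and $p$ applications of Claim~\ref{binarysearch}, and recurses on each induced subgraph $G_i = G[W_i]$ with parameter $k-1$. Writing $m_i = |E(G_i)|$, this gives
\[ T_k(n, m) \;\le\; O(pm \log^2 n) \;+\; \sum_{i=1}^{p} T_{k-1}(n/p,\, m_i), \]
subject to the structural constraints $\sum_i m_i \le m$ and $m_i \le (n/p)^2$. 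The base case $k = 1$ is a single APSP call, for which I would use $T_1(n,m) = \tilde O(mn)$ in the combinatorial version and $T_1(n,m) = O(n^{2+\mu(t)})$ via Theorem~\ref{zwick} in the Zwick version.

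The first bound $T_k = \tilde O(m n^{1/k})$ follows by routine induction. Given $T_{k-1}(n', m') \le C\, m' (n')^{1/(k-1)} \log^{2(k-1)} n$, the recursive sum telescopes to $C\, m (n/p)^{1/(k-1)} \log^{2(k-1)} n$ using only $\sum_i m_i \le m$, and the choice $p = n^{1/k}$ equates the recursive and outer terms at $O(m n^{1/k} \log^{2k} n)$.

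The substantive bound is $T_k = \tilde O(m^{\alpha_k} n)$ with $\alpha_k = 2^{k-1}/(2^k - 1)$, for which I would first verify the algebraic recurrence $\alpha_k = \tfrac{2\alpha_{k-1}}{1 + 2\alpha_{k-1}}$ starting from $\alpha_1 = 1$. Assuming $T_{k-1}(n', m') \le C\, (n')(m')^{\alpha_{k-1}} \log^{2(k-1)} n$, the recursive cost is at most $C(n/p) \log^{2(k-1)} n \cdot \sum_i m_i^{\alpha_{k-1}}$. Two upper bounds on $\sum_i m_i^{\alpha_{k-1}}$ are available: a H\"older/power-mean bound $\sum_i m_i^{\alpha_{k-1}} \le p^{1-\alpha_{k-1}} m^{\alpha_{k-1}}$ (from $\sum_i m_i \le m$), and a pointwise bound $\sum_i m_i^{\alpha_{k-1}} \le p \cdot (n/p)^{2\alpha_{k-1}}$ (from $m_i \le (n/p)^2$). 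The choice $p = n/m^{1/(1+2\alpha_{k-1})}$ lies in $[1, n]$ (since $m \le n^2 \le n^{1+2\alpha_{k-1}}$) and satisfies $pm = n m^{\alpha_k}$. The main technical step is then a short algebraic check, leveraging the recurrence for $\alpha_k$, that at this $p$ the recursive term is also $O(n m^{\alpha_k}\, \text{polylog}(n))$ whichever of the two sum bounds is tighter: the H\"older bound is tighter exactly when $pm \le n^2$, i.e.\ $m \le n^{1/\alpha_k}$, and substitution yields $n^{1-\alpha_{k-1}} m^{\alpha_k(1+\alpha_{k-1})} \le n m^{\alpha_k}$ precisely in that regime; the pointwise bound wins in the complementary regime and collapses to $n\cdot (n/p)^{2\alpha_{k-1}} = n m^{\alpha_k}$ directly. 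I expect this two-case algebra to be the main obstacle; the rest is bookkeeping. The algorithm then picks between $p = n^{1/k}$ and $p = n/m^{1/(1+2\alpha_{k-1})}$ based on $m$ and $n$, achieving the $\min$ of the two runtimes.

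For the Zwick variant, the only change is the base case. One starts the induction at $k = 2$ with the direct calculation $T_2(n, m) \le O(pm \log^2 n) + p(n/p)^{2+\mu(t)} = O(pm \log^2 n + n^{2+\mu(t)}/p^{1+\mu(t)})$, optimized at $p = n/m^{1/(2+\mu(t))}$ to yield $T_2 = \tilde O(n\, m^{c_2(\mu(t))})$ with $c_2(\mu) = (1+\mu)/(2+\mu)$. The same inductive step carries through with the recurrence $c_k(\mu) = 2 c_{k-1}(\mu) / (1 + 2 c_{k-1}(\mu))$, and one verifies by direct substitution that this recurrence produces the closed form $c_k(\tau) = \tfrac{2^{k-2}(1+\tau)}{2^{k-1}(1+\tau) - \tau}$ stated in the paper, reducing to $\alpha_k$ at $\tau = 1$.
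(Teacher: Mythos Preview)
Your approach is correct and mirrors the paper's: same induction on $k$, same recurrence $T_k \le O(pm\log^2 n) + \sum_i T_{k-1}(n/p, m_i)$, same two choices of $p$ (namely $p = n^{1/k}$ for the first bound and $p$ solving $pm = (n/p)^{2\alpha_{k-1}} n$ for the second), and the same closed-form recursion $\alpha_k = 2\alpha_{k-1}/(1+2\alpha_{k-1})$, which the paper writes as $2c_{k-1}+1 = 2c_{k-1}/c_k$.

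The one place the paper is cleaner is Analysis~2. Rather than carry the hypothesis $T_{k-1}(n',m') \le C\,n'(m')^{\alpha_{k-1}}\log^2 n$ and then split into a H\"older case and a pointwise case, the paper \emph{weakens} the inductive hypothesis to $T_{k-1}(n',m') = O((n')^{2c_{k-1}+1}\log^2 n)$ using $m' \le (n')^2$ once and for all. The recursive cost is then trivially $p\cdot (n/p)^{2c_{k-1}+1}$, and at the chosen $p$ this equals $m^{c_k} n$; at the end one re-invokes $m \le n^2$ to close the induction. In fact your H\"older case is never needed: at your $p = n/m^{1/(1+2\alpha_{k-1})}$ the pointwise bound already gives exactly $n\,m^{\alpha_k}$, so the two-regime analysis is doing no work.

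One minor point: your $\log^{2k} n$ is loose. Since the outer cost at each level is $O(mp\log^2 n)$ and the recursive cost inherits $\log^2 n$ from the hypothesis, the constants add (not the log exponents), so for fixed $k$ the bound stays $O(\cdot\log^2 n)$ as stated.
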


\begin{proof}
To simultaneously analyze both versions of the algorithm, our algorithm's runtime will be described in terms of a placeholder $\tau$, such that APSP computations within the algorithm are done in $O(n^{2+\tau}\log n)$ time.  To obtain the runtime bound for general weighted DAGs, we will let $\tau = 1$, and note $c_k(1) = \frac{2^{k-1}}{2^k-1}$. To obtain the runtime bound for DAGs with integer edge weights bounded by $n^t$ for $t < 3 - \omega$, we will let $\tau = \mu(t)$.

Topologically sorting the graph takes $O(m \log n)$ time which is absorbed into the final runtime.

In order to use $k=1$ as a base case, our inductive hypothesis will assume a slightly weaker claim about the runtime: in the inductive step for $k$, we will assume there is an $O(\min(mn^{1/(k-1)}\log^2 n, n^{2c_{k-1}(\tau)+1}\log^2 n))$-time  algorithm which certifies for each $v \in V$ that $\epsilon(v) > r$ or that $\epsilon(v) \leq (k-1)r$. Note that $n^{2c_{k-1}}\ge m^{c_{k-1}}$. Then in the base case where $k = 1$, APSP takes time $O(\min(mn\log n, n^{2+\tau}\log n))$, satisfying the inductive hypothesis. %\mina{we have $m$ in the second bound} \jenny{this is the base case, we only need to prove the IH not the actual runtime}

Consider $k > 1$. Running Dijkstra to and from $W_i$ for each $i$ takes $O(mp\log n)$. It takes time $O(mp\log^2 n)$ to apply Claim \ref{binarysearch} twice for each $i$, to construct sets $S_i$ and symmetric sets $S_i'$ (constructed in the same way as the sets $S_i$ but with left and right swapped, pictured in Figure \ref{fig:min-ecc-2}). 

We also do recursive calls of our algorithm on at most $p$ subgraphs, induced by sets $W_i$. Below, we analyze the runtime of the recursive calls in two different ways, giving us two upper bounds on the algorithm's runtime.

\paragraph*{Analysis 1} Let $m_i = |E(G[W_i])|$; then note $\sum_i m_i \leq m$. For each $i$, the recursive call on $W_i$ takes time $O(m_i(n/p)^{1/(k-1)} \log^2 n)$, so in total the recursive calls take time $O(m(n/p)^{1/(k-1)} \log^2 n)$. Let $p = n^{1/k}$, so that $mp = m(n/p)^{1/(k-1)}$. Then the runtime is $O(mn^{1/k} \log^2 n)$.

%fix runtime
\paragraph*{Analysis 2} Since $|W_i| = n/p$, a recursive call on $G[W_i]$ takes time $O((n/p)^{2c_{k-1}(\tau)+1} \log^2 n)$. We do at most $p$ such calls, so the total runtime of the recursive calls is $O((n/p)^{2c_{k-1}(\tau)} n \log^2 n)$. Now, we choose $p$ so that $mp =  (n/p)^{2c_{k-1}(\tau)} n$. Then $m = (n/p)^{2c_{k-1}(\tau) + 1}$. Recall that $c_k(\tau) = \frac{2^{k-2}(1+\tau)}{2^{k-1}(1+\tau) -\tau}$ and note that $2c_{k-1}(\tau) + 1 = \frac{2^{k-1}(1+\tau) - \tau}{2^{k-2}(1+\tau) - \tau} = \frac{2c_{k-1}(\tau)}{c_k(\tau)}$. Thus, $m^{c_k(\tau)} = (n/p)^{2c_{k-1}(\tau)} $. So the runtime of the algorithm is $O((n/p)^{2c_{k-1}(\tau)} \cdot n \log^2 n) = O(m^{c_k(\tau)}n \log^2 n)$. Since $m = O(n^2)$, this satisfies the inductive hypothesis.
\end{proof}

%\subsubsection{Min-Eccentricities Binary Search}\label{min-ecc-binary-search}
%\subsubsection{Proof of Theorem \ref{k-approx-1} and Theorem \ref{k-approx-2}}
%\begin{reminder}{Theorem \ref{k-approx-1}}
%Let $k \geq 2$ be an integer. For any $\delta > 0$, there is an $\tilde{O}(\min(mn^{1/k}/\delta,$ $m^{2^{k-1}/(2^k -1)}n/\delta))$-time algorithm which, given a DAG $G$, outputs for every vertex $v \in V$ an estimate $\epsilon'(v)$ such that $\epsilon(v) \leq \epsilon'(v) < (k+\delta)\epsilon(v)$.
%\end{reminder}

%\begin{reminder}{Theorem \ref{k-approx-2}}
% Let $k \geq 2$ be an integer. For any $\delta > 0$, there is an $\tilde{O}(\min(mn^{1/k}/\delta,$ $m^{c_k(\mu(t))}n/\delta))$-time algorithm which, given a DAG $G$ with integer edge weights bounded by $n^t$ for $t < 3 -\omega$, outputs for every vertex $v \in V$ an estimate $\epsilon'(v)$ such that $\epsilon(v) \leq \epsilon'(v) < (k+\delta)\epsilon(v)$.
%\end{reminder}

%\subsubsection{Min-Radius Binary Search}\label{min-rad-binary-search}

\subsection{Lower Bounds}
In this section, using an essentially linear time reduction, we reduce Triangle Detection to $(2-\delta)$-approximation of min-radius.  %\jenny{should there be tildes in any of these runtimes?} %\jenny{have we defined truly subcubic anywhere?}\mina{added to the prelim} 

\begin{reminder}{Theorem \ref{thm:minradlb}}
If there is a $T(n,m)$-time algorithm for $(2-\delta)$-approximation of min-radius in $O(n)$-node $\tilde{O}(m)$-edge DAGs for some $\delta>0$, then there is an $\tilde{O}(T(n,m) + m)$-time algorithm for Triangle Detection on graphs with $n$ nodes and $m$ edges.
\end{reminder}% \jenny{should this be $2-\delta$?}mina: yup

\begin{proof}
We are going to use two gadgets from previous works: %\jenny{added bullets for legibility, you can get rid of them if you think it's better without} mina: they are nice thanks
\begin{itemize}
\item DAG gadget \cite{avw}: Given a set $X$ of $n$ nodes $v_1,\ldots,v_n$ and a constant integer parameter $t\ge 2$, the gadget creates a DAG $DG_t(X)$ with at most $O(n)$ nodes and $O(n\log{n})$ edges such that in the topological order
of $DG_t(X)$, $v_i<v_{i+1}$, and for any two nodes of $DG_t(X)$ $x, y$ where $x < y$ in the topological order,
$d(x, y) \le t+1$.%\jenny{added the word constant here}

\item Connectivity gadget \cite{sparsereductions}: Let $X=\{v_1,\ldots,v_n\}$, and let $X'=\{v_1',\ldots,v_n'\}$ be a copy of $X$, where both $X$ and $X'$ are independent sets. Then we can add a connectivity gadget $U(X)$ along with edges from $X$ to $U(X)$ and from $U(X)$ to $X'$, such that $|U(X)|=O(\log{n})$, for all $i\neq j$ we have $d(v_i,v_j')=2$, and there is no path from $v_i$ to $v_i'$. 
\end{itemize}

Now let $G=(A,B,C,E_G)$ be an instance of Triangle Detection, with $n$ nodes and $m$ edges. We create a DAG $G^*$ such that if $G$ has a triangle (YES case), the min-radius of $G^*$ is $t+1$, and if $G$ doesn't have a triangle (NO case), the min-radius of $G^*$ is $2t$. We let $t$ be an integer such that $2-\delta/2<\frac{2t}{t+1}$, so that a fast $(2-\delta)$-approximation algorithm is also a fast $(\frac{2t}{t+1}-\delta/2)$-approximation algorithm, and hence it can distinguish min-diameter $t+1$ vs $2t$. % and as a result distinguish the YES case and the NO case of Triangle Detection. 

We define $G^*$ as follows: $G^*$ has $A$, $B$, and $C$ as part of its vertex set. Let $A_1', A_2',\ldots, A_{t+1}'$ be copies of $A$. %\jenny{we should specify what t is (e.g. let t be a positive integer such that $2 - \delta < \frac{2t}{t+1}$...)}.
Add $E_G(A,B)$ to $G^*$ with edges directed from $A$ to $B$, and add $E_G(B,C)$ with edges directed from $B$ to $C$. For any $c\in C$ and $a\in A$, add an edge from $c$ to $a'\in A_2'$ if $a$ and $c$ are attached in $G$, where $a'$ is the copy of $a$ in $A_2'$. For each $i=1,\ldots,t$, connect the copy of $a$ in $A_i'$ to the copy of $a$ in $A_{i+1}'$ for all $a\in A$.

Now we add the two gadgets. Add the connectivity gadget $U(A)$ between $A$ and $A_1'$. 
Add two copies of $DG_t(A)$ sharing $A$, and denote the union of these copies by $DAG(A)$. Also add a node $y$, and add edges from all nodes in $A$ to $y$; this guarantees that the center of $G^*$ must be in $DAG(A)$.

To make all nodes in $A$ at distance $t+1$ to $A_1'$, make $t-1$ copies of $U(A)$, $U_1,\ldots, U_{t-1}$. For each $i=1,\ldots,t-1$, connect the copy of $u$ in $U_i$ to the copy of $u$ in $U_{i+1}$, for any $u\in U(A)$, where $U_t=U(A)$.   
Add edges from all nodes in $A\cup B\cup C$ to all nodes in $U_1$. 

To make all nodes in $A$ at distance $t+1$ to $B$ and $C$, let $x_1,\ldots,x_t$ be a path of length $t-1$. Connect all nodes of $A$ to $x_1$, and connect $x_t$ to all nodes of $B\cup C$. See Figure \ref{fig:minradlb} for the construction. Note that $G^*$ is a DAG, with the order of sets of vertices being $DAG(A),y,x_1,\ldots,x_t,B,C,U_1,\ldots,U_{t-1},U(A),A_1',\ldots,A_{t+1}'$. %\jenny{this list excludes DAG(A), y, and the $x_i$s; I think it'd be useful to include them} mina:done
Moreover, $G^\ast[A \cup B \cup C \cup A_2]$ has $m$ edges corresponding to the original edges of $G^\ast$, and besides those we only added $O(n\log n)$ edges to $G^\ast$. So $G^*$ has $O(n)$ nodes and $O(m+n\log n)$ edges. %\jenny{rephased a bit to avoid implying that the $[C, A_2]$ edges in $G^\ast$ were originally in G, which is sort of almost but not quite true} mina: sounds good.
%Moreover, we only added $\tilde{O}(n)$ edges to the original edges of $G$, so if $G$ has $n$ nodes and $m$ edges, $G^*$ has $O(n)$ nodes and $\tilde{O}(m)$ edges

\begin{figure}[h]
  \centering
  \includegraphics[width=0.7\linewidth]{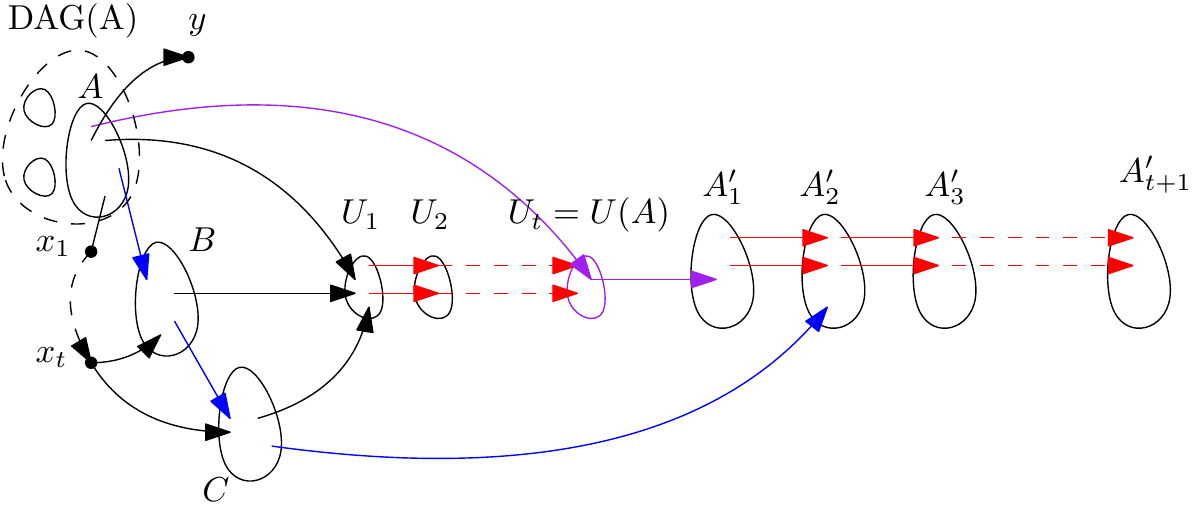}
  \caption{Graph $G^*$ created from the Triangle Detection instance $G$. Blue edges are edges in $G$, red edges are between two nodes that are copies of the same vertex. Purple edges are part of the connectivity gadget. Dashed lines are subpaths.}
  \label{fig:minradlb}
\end{figure}

We will show that if the Triangle Detection instance is a YES instance, then there is a node $a \in A$ such that $\epsilon(a) = t+1$. If the Triangle Detection instance is a NO instance, then we show that for all nodes in $G^*$, their min-eccentricity is at least $2t$.

\paragraph*{YES case.} Let $abc$ be a triangle in $G$. We show that $\epsilon(a) = t+1$. Note that $\dm(a,\bar{a})\le t+1$ for all $\bar{a}\in DAG(A)$. We already know that $d(a,s)\le t+1$ for any $s\in B\cup C\cup \{x_1,\ldots,x_t, y\}$. For any $u \in U_i$ for $i \leq t+1 $, $d(a, u) \leq t+1$ using the path going through $U_1, \dots U_{i-1}$.
%\jenny{could we change this to something like: For any $u \in U_i$ for $i \leq t + 1$, $d(a, u) \leq t$ using the path going through $U_1, \dots U_{i-1}$...currently there's no mention of any $U_i$s except $U(A)$.} 
%For any $u\in U(A)$, $d(a,u)\le t$ using the path going through $U_1,\ldots, U_{t-1}$. 
Since for any $z'\in A_1'$, there is a $u\in U(A)$ that has an edge to $z'$, we have $d(a,z')\le t+1$. Now for all $z'\in A_2'$ where $z'$ is a copy of $z\in A$ and $z\neq a$, we have $d(a,z')=3$ through $U(A)$ and $A_1'$ (using the edges of the connectivity gadget). For $z=a$, using the triangle edges going from $A$ to $B$ to $C$, we have that $d(a,z')=3$. So for all $z'\in A_2'\cup \ldots \cup A_{t+1}'$, we have $d(a,z')\le t+1$.

\paragraph*{NO case.}
Suppose that there is no triangle in $G$. First, note that the min-eccentricities of the vertices outside $DAG(A)$ are infinite, because there is no path between them and $y$. Moreover, if $z\in DAG(A)\setminus A$, it has a copy $z'\in DAG(A)\setminus A$ (in the other copy of $DG_t(A)$), and there is no path between $z$ and $z'$. This is because this path must go through $A$, and since $DAG(A)$ consists of two copies of $DG_t(A)$ sharing $A$, the set of nodes in $A$ that $z$ has a path to (from) is exactly the same as the set of nodes in $A$ that $z'$ has a path to (from). So there is no $a \in A$ such that that $z$ has a path to $a$ and $z'$ has a path from $a$.

Now it remains to compute the min-eccentricities of the vertices in $A$. Let $a\in A$, and let $a_{t+1}'\in A_{t+1}'$ be the copy of $a$. We show that $d(a,a_{t+1}')= 2t$. Let $P$ be a shortest path from $a$ to $a_{t+1}'$. First note that any path from $a$ to $a_{t+1}'$ must go through $a_2'\in A_2'$, where $a_2'$ is a copy of $a$, and we have $d(a_2',a_{t+1}')=t-1$. 
We also know that there is no path from $a$ to $a_2'$ using the edges from $A$ to $U(A)$, because this path would need to contain a path between $a$ and $a_1'\in A_1$ in $G^*[A\cup U(A)\cup A_1']$, and from the construction of the connectivity gadget there is no such path. If $P$ does not use any $C\times A_2'$ edge, then the path must go through $U_i$ for all $i$, and hence it is of length $2t$. So if the min-eccentricity of $a$ is smaller than $2t$, the path $P$ uses a $C\times A_2'$ edge $ca_2'$ for some $c\in C$. If $x_1$ is on the $ac$ path, then the path goes through $x_i$ for all $i$, and hence it is of length $2t$. Then $x_1$ is not on the path, so the $ac$ path must go through $B$. In particular, there is a $b\in B$ such that $ab,bc\in E(G^*)$. Since $ca_2'\in E(G^*)$, this implies that $abc$ is a triangle in $G$, which is a contradiction. So $\epsilon(a) \geq 2t$. 

\end{proof}

\section{Min-diameter}

Our min-diameter approximation algorithm relies on Yuster and Zwick's fast sparse matrix multiplication algorithm. Here, we define $\alpha = \max\{0 \leq r \leq 1 \ | \ \omega(1,r,1) = 2\}$ and  $\beta = \frac{\omega - 2}{1 - \alpha}$.

\begin{theorem}[\cite{yzmm}]\label{yzmm}
If $A$ and $B$ are $n$ by $n$ matrices with at most $l$ nonzero entries each, then $A$ and $B$ can be multiplied in $O(l^{\frac{2\beta}{\beta+1}}n^{\frac{2-\alpha\beta}{\beta+1}} + n^{2 + o(1)})$ time.\footnote{To be precise, given known bounds $\alpha \geq a, \omega \leq c$, one can define $b = \frac{c - 2}{1 - a}$, and then equivalents of Theorem \ref{yzmm} hold for any such pair of values $a, b$, not just for the ``true'' values $\alpha, \beta$. This is implicit in \cite{yzmm}.}
\end{theorem}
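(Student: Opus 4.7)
The plan is to follow a dense-sparse partitioning along the summation index of the product, combining direct outer-product computation for ``sparse'' contributions with fast rectangular matrix multiplication for ``dense'' contributions. Fix a threshold $\Delta$ to be chosen at the end. Call an index $k \in [n]$ \emph{heavy} if column $k$ of $A$ has more than $\Delta$ nonzero entries, and \emph{light} otherwise; since $A$ has at most $l$ nonzeros in total, the number of heavy indices is bounded by $h := l/\Delta$. Decompose $AB = \sum_{k \text{ light}} A_{\cdot k} B_{k \cdot} + \sum_{k \text{ heavy}} A_{\cdot k} B_{k \cdot}$ and handle the two parts by separate methods, accumulating both into a shared output array (whose $n^2$-sized initialization will be absorbed into the $n^{2+o(1)}$ term).

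For the light part, compute each outer product $A_{\cdot k} B_{k \cdot}$ directly. The cost of generating the outer product is $\mathrm{nnz}(A_{\cdot k}) \cdot \mathrm{nnz}(B_{k \cdot}) \leq \Delta \cdot \mathrm{nnz}(B_{k \cdot})$, so summing over all light $k$ gives $O(\Delta \cdot l)$ work. For the heavy part, let $A'$ be the $n \times h$ submatrix of $A$ consisting of its heavy columns and let $B'$ be the corresponding $h \times n$ submatrix of $B$, and compute $A' B'$ by fast rectangular matrix multiplication in $O(n^{\omega(1, \log_n h, 1)})$ time. If $h \leq n^{\alpha}$ this is $n^{2+o(1)}$ by the definition of $\alpha$. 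Otherwise, the convexity of $r \mapsto \omega(1, r, 1)$ together with the endpoint values $\omega(1, \alpha, 1) = 2$ and $\omega(1, 1, 1) = \omega$ yields the bound $\omega(1, r, 1) \leq 2 + \beta(r - \alpha)$ on $[\alpha, 1]$, so the cost is $O(n^{2 - \alpha \beta} h^{\beta}) = O(n^{2 - \alpha \beta} (l/\Delta)^{\beta})$.

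Finally, pick $\Delta$ and assemble. In the regime $h > n^\alpha$, balance the two dominant terms by setting $\Delta$ so that $l \Delta = n^{2 - \alpha \beta} (l/\Delta)^{\beta}$, giving $\Delta = n^{(2 - \alpha \beta)/(\beta + 1)} \, l^{(\beta - 1)/(\beta + 1)}$ and therefore a combined cost of $O\!\left(l^{2\beta/(\beta+1)} \, n^{(2 - \alpha \beta)/(\beta+1)}\right)$. The complementary regime $h \leq n^\alpha$ (when $l$ is small enough that few columns are heavy, so the rectangular multiplication is essentially free) is subsumed by the additive $n^{2+o(1)}$ term, which is in any case needed to touch the output. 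The main obstacle in turning this sketch into a proof is justifying the piecewise-linear upper bound on $\omega(1, r, 1)$ used above, which requires invoking convexity of the rectangular exponent plus the defining property of $\alpha$ as the largest $r$ with $\omega(1,r,1) = 2$; one must also verify that the $o(1)$ slack in algorithmic realizations of $\omega(1,\alpha,1) = 2$ is absorbed into the $n^{2+o(1)}$ term rather than contaminating the main term. The remaining accounting is routine.
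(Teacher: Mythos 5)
Your argument is correct and is essentially the Yuster--Zwick proof that the paper simply imports (Theorem \ref{yzmm} is cited from \cite{yzmm} without proof): partition the summation indices into heavy and light, handle the light part by explicit outer products in $O(l\Delta)$ time and the heavy part by fast rectangular multiplication bounded via convexity of $r \mapsto \omega(1,r,1)$ together with $\omega(1,\alpha,1)=2$, then balance the two terms. The only cosmetic difference from \cite{yzmm} is that you threshold on the column degree of $A$ alone rather than sorting indices by the product of the column degree of $A$ and the row degree of $B$ and taking the top $h$; both yield the same $l^2/h$ accounting for the sparse part, and the fact that the actual number of heavy columns is $\min(n, l/\Delta)$ rather than $l/\Delta$ is harmless since the rectangular-multiplication cost bound is monotone in the inner dimension.
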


This sparse matrix multiplication algorithm will be used to prove the following proposition.

\begin{proposition}\label{mindiamprop}
There is an $O(m^{\frac{2\beta}{3\beta+1}}n^{\frac{4\beta+2-\alpha\beta}{3\beta+1} + o(1)} + n^{2 + o(1)})$-time algorithm which, given an unweighted DAG $G$ and a parameter $D'$,  reports that the min-diameter $D$ of $G$ satisfies $D \leq \left\lceil \frac{3D'}{2} \right\rceil$ or that it satisfies $D > D'$.
\end{proposition}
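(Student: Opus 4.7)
The plan is to combine a hitting-set argument with the sparse matrix multiplication bound of Theorem~\ref{yzmm}, using only the restricted form of the triangle inequality that holds along the topological order of a DAG. Fix a topological order of $V$ and a parameter $s$ to be chosen later. Sample a random set $H \subseteq V$ of size $\tilde{O}(n/s)$ that intersects every vertex subset of size $\geq s$ w.h.p., and run BFS both into and out of every $h \in H$ in $\tilde{O}(mn/s)$ time. If any $\dm(h,v) > D'$ is observed, output $D > D'$ and halt. Otherwise $\dm(h,v) \leq D'$ for all $h \in H, v \in V$, and since $d(v,h) = \infty$ whenever $v$ is topologically to the right of $h$, we obtain $d(h,v) \leq D'$ for every such pair.

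For each $u$ let $R_u = \{v : d(u,v) \leq \lceil D'/2 \rceil\}$, and define $R'_u$ to be either $R_u$ (when $|R_u| < s$) or the $s$ topologically earliest vertices of $R_u$ (otherwise). When $|R_u| \geq s$, let $h_u$ be the topologically leftmost vertex of $R'_u \cap H$, which exists w.h.p.\ since $|R'_u| = s$. Symmetrically define $L_w = \{v : d(v,w) \leq \lceil D'/2 \rceil\}$, $L'_w$ (the $s$ topologically latest vertices of $L_w$), and $h_w$ (topologically rightmost in $L'_w \cap H$). In all cases $|R'_u|, |L'_w| \leq s$.

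The key lemma I would prove is: for any $u < w$ in the topological order with $d(u,w) \leq D'$, at least one of \emph{(a)} $h_u$ defined and $w \geq h_u$, \emph{(b)} $h_w$ defined and $u \leq h_w$, or \emph{(c)} $R'_u \cap L'_w \neq \emptyset$ must hold. Indeed, a shortest $u \to w$ path $u = v_0,\ldots,v_k = w$ has its midpoint $v_m$ at $m = \lceil k/2 \rceil$ satisfying $d(u,v_m), d(v_m,w) \leq \lceil D'/2 \rceil$, so $v_m \in R_u \cap L_w$ with $u < v_m < w$. If (a) fails (either $h_u$ is undefined, in which case $R'_u = R_u$, or $h_u$ is defined with $w < h_u$, in which case $v_m < h_u$ and the set $\{v \in R_u : v < h_u\}$ is disjoint from $H$ by the leftmostness of $h_u$, hence has fewer than $s$ elements by the hitting-set property, hence is contained in $R'_u$), then $v_m \in R'_u$. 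A symmetric argument when (b) fails gives $v_m \in L'_w$, so (c) holds. Each of (a), (b), (c) in turn yields $d(u,w) \leq \lceil 3D'/2 \rceil$: (a) and (b) concatenate a jump of length $\leq \lceil D'/2 \rceil$ with a BFS-certified path of length $\leq D'$ through the jumping-off point, while (c) gives $d(u,w) \leq 2\lceil D'/2 \rceil \leq \lceil 3D'/2 \rceil$ through the common vertex.

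Build Boolean matrices $A,B$ with $A[u,v] = 1$ iff $v \in R'_u$ and $B[v,w] = 1$ iff $v \in L'_w$; each has at most $ns$ nonzeros, so Theorem~\ref{yzmm} computes $C = AB$ in $O((ns)^{2\beta/(\beta+1)} n^{(2-\alpha\beta)/(\beta+1)} + n^{2+o(1)})$ time, and (c) holds for $(u,w)$ iff $C[u,w] > 0$. Conditions (a), (b) take $O(1)$ per pair once the topological ranks of $h_u,h_w$ are known. If some $u < w$ fails all three tests, report $D > D'$ (justified by the contrapositive of the key lemma); otherwise report $D \leq \lceil 3D'/2 \rceil$. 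Balancing $\tilde{O}(mn/s)$ against the matrix-multiplication cost gives the runtime in the proposition. The main obstacle I anticipate is constructing $A$ and $B$ themselves within budget: a naive BFS from every vertex is $\Omega(mn)$ in the worst case, so I expect to use a truncated topological-order forward traversal from each $u$ that halts upon discovering $s$ vertices at distance $\leq \lceil D'/2 \rceil$, amortizing the total work against the $ns$ nonzero entries that must be written down in any case.
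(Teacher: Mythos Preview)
Your proposal is correct and follows essentially the same route as the paper: a hitting set $H$ (the paper uses the deterministic greedy cover of Lemma~\ref{hslemma}, you use random sampling), BFS to and from $H$, the ``leftmost $s$ forward-close / rightmost $s$ backward-close'' sets $R'_u,L'_w$ (the paper calls them $X_u,Y_w$), the three-case lemma, and sparse matrix multiplication to test $R'_u\cap L'_w\neq\varnothing$. Two cosmetic differences: the paper uses $D'/2$ on one side and $\lceil D'/2\rceil$ on the other, and in your key-lemma argument the step ``fewer than $s$ elements by the hitting-set property'' is actually immediate from the definition of $R'_u$ (any $v\in R_u$ with $v<h_u$ is automatically in $R'_u$, since $h_u\in R'_u$ and $R'_u$ is a leftmost prefix of $R_u$); no appeal to the hitting set is needed there, which is good because that set depends on $H$.

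The one place where your sketch is genuinely incomplete is exactly the obstacle you flag: computing all $R'_u$. A ``truncated topological-order forward traversal from each $u$'' does not obviously stay within budget, because the $s$ leftmost close vertices may be interleaved in the order with many far vertices, forcing you to scan $\Omega(n)$ vertices (and their in-edges) before collecting $s$ close ones; different $u$'s can scan the same range, so amortization against the $ns$ output entries does not save you. The paper's fix is to expand only out-neighbors of vertices already placed in $X_u$ at distance $<D'/2$: one checks inductively that the next leftmost vertex of $N^{\text{out}}_{D'/2}(u)\setminus X_u$ always has its shortest-path predecessor already in $X_u$, so a ``leftmost pending out-neighbor'' scan over the $\le s$ current members finds it in $O(s)$ time, giving $O(s^2)$ per $u$ and $O(ns^2)=O(n^{1+2\epsilon})$ total. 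One then verifies $n^{1+2\epsilon}\le n^{1-\epsilon}m$ for the chosen $\epsilon$, so this term is absorbed.
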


The algorithm of Proposition \ref{mindiamprop} will be described and proven to work in subsection 3.1, and its runtime will be analyzed in Lemma \ref{mindiamruntime} in subsection \ref{min-diam-appendix}. Then Proposition \ref{mindiamprop} allows us to obtain the min-diameter approximation algorithm given in Theorem \ref{mindiamthm} below. 

\begin{theorem}\label{mindiamthm}
There is an $O(m^{\frac{2\beta}{3\beta+1}}n^{\frac{4\beta+2-\alpha\beta}{3\beta+1} + o(1)} + n^{2 + o(1)})$-time algorithm which, given an unweighted DAG $G$, outputs an estimate $D_0$ for its min-diameter $D$ such that $D \leq D_0 \leq \left\lceil \frac{3D}{2} \right\rceil$.
\end{theorem}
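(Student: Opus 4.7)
The plan is to use the decision-version algorithm of Proposition \ref{mindiamprop} inside a standard binary search over the parameter $D'$. On any query $D'$, that algorithm returns one of the two (necessarily true) statements $D \leq \lceil 3D'/2 \rceil$ or $D > D'$, so I can maintain a lower bound $\ell$ on $D$ and an upper bound $\lceil 3h/2 \rceil$ on $D$, shrinking the interval $[\ell, h]$ until $\ell = h$, at which point $\lceil 3\ell/2 \rceil$ will be a valid $(3/2, 1/2)$-approximation.

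Concretely, I initialize $\ell = 0$ and $h = n$; since the min-diameter of an $n$-vertex graph is at most $n-1$, the invariants $D \geq \ell$ and $D \leq \lceil 3h/2 \rceil$ hold at the start. While $\ell < h$, set $\mathrm{mid} = \lfloor (\ell + h)/2 \rfloor$ and invoke the algorithm of Proposition \ref{mindiamprop} with input $D' = \mathrm{mid}$. If it reports $D \leq \lceil 3\,\mathrm{mid}/2 \rceil$, update $h \leftarrow \mathrm{mid}$; if instead it reports $D > \mathrm{mid}$, update $\ell \leftarrow \mathrm{mid}+1$. Either update preserves the two invariants, and $h - \ell$ decreases geometrically, so the loop terminates after $O(\log n)$ iterations with $\ell = h$, after which the algorithm outputs $D_0 := \lceil 3\ell/2 \rceil$.

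For correctness, at termination the upper-bound invariant gives $D \leq \lceil 3h/2 \rceil = D_0$, while the lower-bound invariant $\ell \leq D$ gives $D_0 = \lceil 3\ell/2 \rceil \leq \lceil 3D/2 \rceil$, matching the desired $(3/2, 1/2)$-approximation. For the runtime, binary search makes $O(\log n)$ calls to the oracle, and each invocation of Proposition \ref{mindiamprop} takes $O(m^{2\beta/(3\beta+1)}n^{(4\beta+2-\alpha\beta)/(3\beta+1) + o(1)} + n^{2 + o(1)})$ time, so the logarithmic overhead is absorbed into the $n^{o(1)}$ factor in the stated bound. Essentially all of the interesting content lives in Proposition \ref{mindiamprop}; the reduction from approximation to decision is routine, and the only thing to watch out for is the bookkeeping of the invariants, since the decision oracle's output on inputs $D' \in [2D/3, D)$ is not a priori monotone in $D'$ and therefore one should not think of binary search as searching for a sharp transition of a boolean function but rather as maintaining two one-sided guarantees that meet in the middle.
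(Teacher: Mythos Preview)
Your proposal is correct and follows the same approach as the paper: both reduce to Proposition~\ref{mindiamprop} via binary search over $D' \in [0,n]$, absorbing the $O(\log n)$ overhead into the $n^{o(1)}$ factor. Your exposition is actually more careful than the paper's---you explicitly maintain the two one-sided invariants and correctly note that the oracle is not monotone in $D'$, whereas the paper somewhat loosely speaks of ``the smallest value $C$ such that the algorithm reports $D \le \lceil 3C/2\rceil$.''
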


%TO BE CONTINUED - 2022

\begin{proof}
To obtain our approximation $D_0$, we binary search over $D'$ in $[0, n]$ by applying the algorithm of Proposition \ref{mindiamprop} logarithmically many times; note that polylogarithmic factors are $n^{o(1)}$ so they do not affect the runtime bound. Let $C$ be the smallest value found in the binary search such that the algorithm reports that $D \leq \left\lceil \frac{3C}{2}\right\rceil$; then $D > C-1$. Let $D_0 = \left\lceil \frac{3C}{2}\right\rceil$. Then $D \leq D_0 \leq \left\lceil\frac{3D}{2} \right\rceil$, as desired.
\end{proof}

Note that since $\alpha > 0.31389$ \cite{lu18} and $\omega < 2.37286$ \cite{matrixmult2020}, we can use $\beta \simeq 0.5435$. This gives the runtime of $O(m^{0.414}n^{1.522} + n^{2+o(1)})$ stated in Theorem \ref{mindiam}. %\jenny{should this go here or in the introduction?}\mina{here is fine, although it looks different than thm4.} \jenny{oh, yeah I'll update that to the more precise decimal version}

\subsection{Algorithm Description and Correctness}

Our algorithm takes as input an unweighted DAG $G$, an integer $D'$, and a parameter $\epsilon \in [0, 1]$, and reports that $D > D'$ or that $D \leq \left\lceil\frac{3D'}{2} \right\rceil $. (The runtime-minimizing value of $\epsilon$ will be determined later.) 

If at any point, a BFS finds a pair of vertices at min-distance more than $D'$, the algorithm reports that $D > D'$; hence in what follows we will assume that this does not occur. We initially have all pairs of vertices ``unmarked,'' and mark the pairs for which we know that there is a path from one to the other of length at most $\left\lceil\frac{3D'}{2} \right\rceil$. 

The algorithm first takes two preliminary steps: it topologically sorts the graph, and it constructs for each vertex two topologically sorted lists, one of its in-neighbors and one of its out-neighbors.

Our algorithm will then use the greedy set cover algorithm, described in the following lemma. This lemma, and a related randomized version, are standard techniques used in graph distance algorithms (see for example \cite{aingworth, RV13,ChechikLRSTW14,avw}). A proof may be found in \cite{hittingset}. 

\begin{lemma}\label{hslemma}%\mina{changed to lemma}
Let $|V| = n$, let $p = O(n)$, and let $X_1, \dots X_p \subseteq V$ be sets of size $|X_i| \geq n^\epsilon$ for $\epsilon \in [0, 1]$. Then there is an $O(n^{1+\epsilon})$-time algorithm which constructs a set $S \subseteq V$ of size $\tilde{O}(n^{1-\epsilon})$ such that $S \cap X_i \neq \varnothing$ for all $i$.
\end{lemma}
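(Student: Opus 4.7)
The plan is to use the classical greedy covering argument. Initialize $S \leftarrow \varnothing$ and let $C \leftarrow \{X_1, \ldots, X_p\}$ be the collection of sets not yet hit. While $C$ is nonempty, select a vertex $v \in V$ that lies in the maximum number of sets currently in $C$, add $v$ to $S$, and remove from $C$ every set containing $v$. The resulting $S$ is, by construction, a hitting set for $\{X_1,\ldots,X_p\}$.

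The size bound on $S$ comes from an averaging argument. At any moment with $|C| = q$, the total incidence $\sum_{X \in C} |X|$ is at least $q \cdot n^\epsilon$, since the sets themselves are not modified and each has size $\geq n^\epsilon$. Distributing this mass over $n$ vertices shows that some vertex lies in at least $q n^{\epsilon-1}$ sets of $C$, so selecting it shrinks $|C|$ by a factor of at most $(1 - n^{\epsilon-1})$. Starting from $|C| \leq p = O(n)$, after $k = O(n^{1-\epsilon}\log n)$ iterations we have $|C| < 1$, and therefore $|S| = \tilde{O}(n^{1-\epsilon})$.

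To meet the $O(n^{1+\epsilon})$ time budget, I would first observe that the raw input $\sum_i |X_i|$ could be as large as $\Theta(n^2)$, so we cannot afford to scan every $(v, X_i)$ incidence. The fix is to truncate each $X_i$ to an arbitrary fixed subset $Y_i \subseteq X_i$ of size exactly $\lceil n^\epsilon \rceil$; any hitting set for $\{Y_i\}$ is also one for $\{X_i\}$, and the total incidence count drops to $p \cdot \lceil n^\epsilon \rceil = O(n^{1+\epsilon})$. To implement the greedy step, maintain for each vertex a counter $c(v)$ equal to the number of uncovered $Y_i$'s containing $v$, together with a bucket structure indexing vertices by counter value and keeping a pointer to the largest nonempty bucket. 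Selecting the argmax is then $O(1)$, and each counter decrement (triggered by removing a covered set and iterating through its members) costs $O(1)$ as well. Each vertex-set incidence is touched a constant number of times across the whole execution, yielding total runtime $O(n^{1+\epsilon})$.

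I expect the only real obstacle to be the data-structure bookkeeping for the greedy step; the approximation analysis itself is completely standard. As a cleaner, conceptually simpler alternative, one can take $S$ to be a uniformly random subset of size $c\, n^{1-\epsilon}\log n$ and observe that $\Pr[S \cap X_i = \varnothing] \leq (1 - n^{\epsilon-1})^{|S|} \leq n^{-c}$, so a union bound over the $p = O(n)$ sets shows $S$ is a hitting set with high probability for suitably chosen $c$. This can be derandomized by the method of conditional expectations if an explicit deterministic procedure is required, though the greedy approach above is already deterministic and runs within the stated bound.
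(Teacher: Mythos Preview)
Your greedy set-cover argument is correct and is precisely the standard proof; the paper itself does not prove this lemma but simply cites it as folklore (with a pointer to \cite{hittingset} and related references). Your truncation trick to keep the total incidence at $O(n^{1+\epsilon})$ and the bucket data structure for the argmax are the right implementation details, and the randomized alternative you mention is exactly the ``related randomized version'' the paper alludes to.
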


For any $u \in V$, if $|N_{D'/2}^{\text{out}}(u)| < n^\epsilon$ let $X_u = N_{D'/2}^{\text{out}}(u)$ and otherwise let $X_u$ be the left-most $n^\epsilon$ vertices in $N_{D'/2}^{\text{out}}(u)$. So in particular, $|X_u| \leq n^\epsilon$. We can compute $X_u$ as follows: we maintain a list of the $\leq n^\epsilon$ left-most vertices we have found so far that are at distance $< D'/2$ from $u$. At each step, for each vertex in the list, we consider its left-most out-neighbor that is not yet in our set; we add the left-most such out-neighbor to the set. We halt when there are no more such out-neighbors not in our set, or after adding $n^\epsilon$ vertices to our set. Likewise, for any $w\in V$, let $Y_w =  N_{\lceil D'/2 \rceil}^{\text{in}}(w)$ if $|N_{\lceil D'/2} \rceil^{\text{in}}(w)| < n^\epsilon$, and otherwise let $Y_w$ consist of the right-most $n^\epsilon$ vertices in $N_{\lceil D'/2\rceil}^{\text{in}}(w)$. We can compute the sets $Y_w$ in a manner symmetric to how we computed the sets $X_u$. Then we can use Lemma \ref{hslemma} to construct a set $S$ of size $\tilde{O}(n^{1-\epsilon})$ such that for all $u$ having $|N_{D'/2}^{\text{out}}(u)| \geq n^\epsilon$, $S \cap X_u$ is nonempty, and for all $w$ having $|N_{\lceil D'/2\rceil }^{\text{in}}(w)| \geq n^\epsilon$, $S \cap Y_w$ is nonempty.
 
 %\mina{we should have the hitting set lemma, and for that we need to define the sets that we want it to hit, which are the first topologically ordered nodes in the D'/2 neighborhood of every vertex}
Run BFS into and out of every $s \in S$. We may assume that $\dm (s, x) \leq D'$ for all $s \in S, x \in V$.

We will construct matrices $A$ and $B$ with rows and columns indexed by vertices in $V$, as follows: For each vertex $t \in X_u$, let $A[u,t] = 1$. For each vertex $t \in Y_w$, let $B[t, w] = 1$. Multiply $A$ and $B$ using the sparse matrix multiplication algorithm of Theorem \ref{yzmm}.

Now, we will consider any pair of vertices $(u, w)$ where $u$ is to the left of $w$, $u \in R(N_{\lceil D'/2 \rceil }^{\text{in}}(w) \cap S)$, and $w \in L(N_{D'/2}^{\text{out}}(u) \cap S)$. We have that if $d(u, w) \leq D'$, then $(A \cdot B)[u, w] > 0$, and if $(A \cdot B)[u,w] > 0$, then $d(u, w) \leq D' + 1$. Indeed, if $d(u, w) \leq D'$, then there is some intermediate vertex $t$ such that $d(u, t) \leq D'/2$ and $d(t, w) \leq \lceil D'/2 \rceil$. Suppose that $t \not\in X_u$. Then since $X_u$ is defined as the left-most $n^\epsilon$ vertices in $N_{D'/2}^{\text{out}}(u)$, this implies that $|N_{D'/2}^{\text{out}}(u)| > n^\epsilon$ and hence that $|X_u| = n^\epsilon$. Then there is some $s \in S \cap X_u$. Since $t \not\in X_u$, $t$ is to the right of all vertices in $X_u$, and in particular $t$ is to the right of $s$. This implies $t \not\in L(N_{D'/2}^{\text{out}}(u) \cap S)$. But since $w \in L(N_{D'/2}^{\text{out}}(u) \cap S)$ and $t$ lies between $u$ and $w$, this is a contradiction. Thus, $t$ must be in $X_u$, and by symmetry, $t$ is in $Y_w$. So $A[u,t] = 1$ and $B[t, w] = 1$, meaning $(A \cdot B)[u,w] > 0$. Likewise, if $(A \cdot B)[u, w] > 0$, then there exists $t \in X_u \cap Y_w$ such that $d(u, t) \leq D'/2$ and $d(t, w) \leq \lceil D'/2 \rceil$, so $d(u, w) \leq D'+1$. Therefore, we will mark all pairs $(u, w)$ such that $(A \cdot B)[u,w] > 0$.

%Now, consider any $u \in V$ and any $w \in R(X_u)$. %X_u is too narrow, could be stuff on its right; we need to define supersets X_u' and stuff.
%If such a $w$ exists, then it must be the case that $X_u$, then there is some $s \in N_{D'/2}^{\text{out}}(u) \cap S$ such that $s$ is to the left of or is equal to $w$. By assumption, $d(s, w) \leq D'$, so $d(u, w) \leq d(u, s) + d(s, w) \leq D'/2 + D' = \frac{3D'}{2}$. By a symmetric argument, for any $w \in V$ and any $u \in V \cap T_w$ to the right of $w$, we have that $d(u, w) \leq \frac{3D'}{2}$. The algorithm will therefore mark all pairs of vertices $u, w \in V$ except those for which we have simultaneously that $u \in T_w$ and $w \in S_u$.

%Finally, check whether there exists an unmarked pair $(u, w)$. If so, report that $D > D'$. Otherwise, report that $D \leq \frac{3D'}{2}$.

%actually, the $R[u] part is unnecessary, technically

%do something about marking pairs, maybe

Now, consider any $u \in V$ and any $w \not\in L(N_{D'/2}^{\text{out}}(u) \cap S)$ to the right of $u$. We mark the pair $(u, w)$. If such a $w$ exists, then there is some $s \in N_{D'/2}^{\text{out}}(u) \cap S$ such that $s$ is to the left of or is equal to $w$. By assumption, $d(s, w) \leq D'$, so $d(u, w) \leq d(u, s) + d(s, w) \leq D'/2 + D' = \frac{3D'}{2}$. By a symmetric argument, for any $w \in V$ and any $u \not\in R(N_{\lceil D'/2 \rceil}^{\text{in}}(w) \cap S)$ to the left of $w$, we have that $d(u, w) \leq \left\lceil \frac{3D'}{2} \right\rceil$, so again we mark any such pair $(u, w)$. %\mina{change the next sentence to say: we have marked all pair $(u,w)$ that are $D$ close except ..., which we marked in the last step.}
Thus, since we have assumed that $\epsilon(s) \leq D'$ for all $s \in S$, the algorithm will mark all pairs of vertices $u, w \in V$ except those for which we have simultaneously that $u \in R(N_{\lceil D'/2 \rceil}^{\text{in}}(w) \cap S)$ and $w \in L(N_{D'/2}^{\text{out}}(u) \cap S)$.

Finally, check whether there exists an unmarked pair $(u, w)$. If so, report that $D > D'$. Otherwise, report that $D \leq \left\lceil \frac{3D'}{2} \right\rceil$.

%%%%%%%%%%%%%%%%%%%%%%%%%%%%%%%%%%

\subsection{Runtime Analysis}\label{min-diam-appendix}

Here we analyze the runtime of the algorithm of Proposition \ref{mindiamprop}.%, and we describe how to prove Theorem \ref{mindiamthm} from Proposition \ref{mindiamprop} using binary search. 

\begin{lemma}\label{mindiamruntime}
The algorithm of Proposition \ref{mindiamprop} runs in time $\tilde{O}(m^{\frac{2\beta}{3\beta+1}}n^{\frac{4\beta+2-\alpha\beta}{3\beta+1} + o(1)} + n^{2 + o(1)})$.
\end{lemma}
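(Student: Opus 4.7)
My plan is to walk through each step of the algorithm described in subsection 3.1, bound its cost, and then pick the parameter $\epsilon$ to balance the two dominant terms. The eventual exponent of $m$ in the bound is $\tfrac{2\beta}{3\beta+1}$, which is exactly what comes out of setting the BFS-from-hitting-set cost equal to the sparse matrix multiplication cost — so that suggests which two steps are the real bottlenecks.

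I would first dispense with the cheap steps. Topological sorting and building the sorted in/out-neighbor lists is $O((m+n)\log n)$. Each set $X_u$ (respectively $Y_w$) is constructed by an incremental left-to-right expansion that halts after $n^\epsilon$ vertices are added, so every $X_u,Y_w$ has size at most $n^\epsilon$; iterating the expansion, with the sorted out-neighbor lists as a backing data structure, yields a cost which is absorbed into the BFS or matrix-multiplication term below. Building the hitting set $S$ costs $O(n^{1+\epsilon})$ by Lemma \ref{hslemma}, and the resulting $|S|=\tilde O(n^{1-\epsilon})$. The final pass that looks at marked pairs is $O(n^2)$, which is absorbed into the additive $n^{2+o(1)}$ term. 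So the two genuine costs are (i) running BFS into and out of every $s\in S$, costing $\tilde O(|S|\cdot m)=\tilde O(mn^{1-\epsilon})$, and (ii) multiplying $A$ and $B$.

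For the matrix multiplication, each row of $A$ has at most $n^\epsilon$ ones, so $A$ has $l\le n^{1+\epsilon}$ nonzero entries, and the same holds for $B$. Theorem \ref{yzmm} with $l=n^{1+\epsilon}$ gives a runtime of
\[
O\!\left(n^{(1+\epsilon)\frac{2\beta}{\beta+1}}\,n^{\frac{2-\alpha\beta}{\beta+1}}+n^{2+o(1)}\right)=O\!\left(n^{\frac{2\beta(1+\epsilon)+2-\alpha\beta}{\beta+1}}+n^{2+o(1)}\right).
\]
Writing $m=n^\mu$ and setting the BFS and matrix-multiplication exponents equal, $\mu+1-\epsilon=\tfrac{2\beta(1+\epsilon)+2-\alpha\beta}{\beta+1}$, solves to $\epsilon=\tfrac{(\mu+1)(\beta+1)-2\beta-2+\alpha\beta}{3\beta+1}$. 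Substituting this back into $\mu+1-\epsilon$ collapses, after a short calculation, to $\tfrac{2\mu\beta+4\beta+2-\alpha\beta}{3\beta+1}$, i.e.\ the balanced cost is $m^{\frac{2\beta}{3\beta+1}}n^{\frac{4\beta+2-\alpha\beta}{3\beta+1}}$. Adding the unavoidable $n^{2+o(1)}$ contribution from Theorem \ref{yzmm} yields the claimed bound.

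The one step that needs a careful treatment is the construction of the $X_u$ and $Y_w$ sets, since a naive BFS-to-depth-$D'/2$ per vertex would cost $\Theta(mn)$ and dominate everything. The plan is to argue that the described incremental procedure touches only $O(n^\epsilon)$ vertices per source and uses the sorted out-neighbor lists to advance in amortized constant time per addition, so the total cost across all sources is at most $\tilde O(n^{1+\epsilon})$ (or $\tilde O(m n^\epsilon)$ if one is looser), which is in any case absorbed by the BFS-from-$S$ term $\tilde O(mn^{1-\epsilon})$ at the balanced choice of $\epsilon$. Apart from this verification, everything else is a direct substitution and algebraic simplification as above.
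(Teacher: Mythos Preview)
Your overall plan is the same as the paper's: identify the BFS-from-$S$ cost $\tilde O(mn^{1-\epsilon})$ and the sparse matrix multiplication cost (with $l=n^{1+\epsilon}$ nonzeros) as the two bottlenecks, set them equal to solve for $\epsilon$, and read off the claimed bound. Your algebra is correct and produces exactly the paper's choice of $\epsilon$ and final exponent.

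The gap is in your treatment of the $X_u$ and $Y_w$ construction. Your ``amortized constant time per addition'' claim is not justified by the procedure as described: at each of the $\le n^\epsilon$ steps one must scan (or lazily update) the pointers of up to $n^\epsilon$ current vertices, because the newly added vertex can be the next out-neighbor of many of them. The natural bound is therefore $O(n^{2\epsilon})$ per source, i.e.\ $O(n^{1+2\epsilon})$ total, which is exactly what the paper proves. Your fallback $\tilde O(mn^\epsilon)$ does \emph{not} get absorbed by $\tilde O(mn^{1-\epsilon})$: at the balanced choice one has $\epsilon=\tfrac{\alpha\beta+(\beta+1)(\gamma-1)}{3\beta+1}$, and for $\gamma=2$ this gives $\epsilon=\tfrac{\alpha\beta+\beta+1}{3\beta+1}>\tfrac12$ (since $2\alpha\beta+1>\beta$), so $mn^\epsilon$ strictly dominates $mn^{1-\epsilon}$ in the dense regime.

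The fix is to keep $n^{1+2\epsilon}$ as a separate term and verify it is dominated by the balanced cost. After substituting $\epsilon$, the exponent $1+2\epsilon$ becomes $\tfrac{2(\beta+1)}{3\beta+1}\gamma+\tfrac{\beta-1+2\alpha\beta}{3\beta+1}$, and comparing with the main exponent $\tfrac{2\beta}{3\beta+1}\gamma+\tfrac{4\beta+2-\alpha\beta}{3\beta+1}$ reduces to checking $3\beta+3-3\alpha\beta\ge 2\gamma$, i.e.\ $\gamma\le\tfrac{3(\omega-1)}{2}$, which holds for all $\gamma\le 2$ given current bounds on $\omega$. The paper carries out essentially this verification; once you add it, your argument is complete.
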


\begin{proof}

Topologically sorting the graph takes $O(m \log n)$ time which is absorbed into the final runtime. Constructing for each vertex topologically ordered lists of its in-neighbors and out-neighbors can be done in time $\tilde{O}(n^2)$.

Computing the covering set $S$ takes time $\tilde{O}(n^{1+\epsilon})$ and running BFS from its vertices takes time $O(n^{1-\epsilon}m\log n)$. Checking for each pair $(u, w)$ whether $u \in L(N_{D'/2}^{\text{out}}(u) \cap S)$ and $w \in L(N_{D'/2}^{\text{out}}(u) \cap S)$ can be done in $\tilde{O}(n^2)$ time.

For a fixed $u$, to compute $X_u$, we maintain a list of the at most $n^{\epsilon}$ left-most vertices we have found that are at distance $<D'/2$ from $u$. For each vertex, we store its left-most out-neighbor that is not yet in our set. At each step, we find the left-most such out-neighbor of any vertex in the list; this takes time $O(n^\epsilon)$, and updating the list to reflect that this out-neighbor has been added to our set takes time $O(n^\epsilon)$. At each step we add a vertex to our set $X_u$, so there are at most $O(n^\epsilon)$ steps. Hence, constructing $X_u$ for a fixed $u$ takes $O(n^{2\epsilon})$ time. Then constructing all sets $X_u, Y_w$ takes $O(n^{1+2\epsilon})$ time altogether.

Finally, note that there are at most $n^\epsilon$ 1s in each row of $A$, since we only set $A[u, t] = 1$ if $t \in X_u$. Thus, $A$ contains at most $n^{1+\epsilon}$ 1s. By symmetry, the same holds for $B$. Then multiplying $A$ and $B$ can be done in time $O(n^{(1+\epsilon)\frac{2\beta}{\beta+1} + \frac{2-\alpha\beta}{\beta+1} + o(1)} + n^{2 + o(1)})$, using Yuster and Zwick's fast sparse matrix multiplication (Theorem \ref{yzmm}).

Then the total runtime is:
$$\tilde{O}\left(n^{1-\epsilon}m + n^{1+2\epsilon} + n^{(1+\epsilon)\frac{2\beta}{\beta+1} + \frac{2-\alpha\beta}{\beta+1} + o(1)} + n^{2 + o(1)}\right)$$

Let $\gamma$ be the largest value such that $n^\gamma = O(m)$. Let $\epsilon = \frac{\alpha\beta + (\beta+1)(\gamma - 1)}{3\beta+1}$; this value is chosen because it sets the first and third terms in the above runtime equal (up to $n^{o(1)}$ factors), hence asymptotically minimizing their sum. Substituting the value of $\epsilon$ and simplifying, the runtime of the algorithm is:

$$\tilde{O}\left(n^{\frac{2\beta}{3\beta+1}\gamma+\frac{4\beta+2-\alpha\beta}{3\beta+1} + o(1)} + n^{\frac{2\beta+2}{3\beta+1}\gamma + \frac{\beta - 1 + 2\alpha\beta}{3\beta+1}} + n^{2+o(1)}\right) $$

%comparing exponents
%(4\beta + 2 - \alpha\beta) > (3\beta+1)+2\alpha\beta - 2\beta - 2
%3\beta > 3\alpha\beta - 1
%3\beta(1-\alpha) > -1
%\beta > (-1/3)/(1-\alpha)

We note that $3\beta-3\alpha\beta > 3(\omega - 2) \geq 0 > -1$, giving:

$$4\beta+2-\alpha\beta > 2 + (\beta - 1 + 2\alpha\beta) \geq 2\gamma + (\beta - 1 + 2\alpha\beta)$$

Thus, the first term of the above runtime dominates the second. Substituting $n^\gamma = O(m)$, and noting that the polylogarithmic factors in the runtime are of order $n^{o(1)}$, the runtime is $O(m^{\frac{2\beta}{3\beta+1}}n^{\frac{4\beta+2-\alpha\beta}{3\beta+1} + o(1)} + n^{2+o(1)})$, as desired.

\end{proof}

\section*{Acknowledgements}

We thank our advisor, Virginia Vassilevska Williams, for many helpful suggestions. 

\bibliography{references}

%\appendix

\end{document}